\documentclass[11pt]{article}
\usepackage[margin=1in]{geometry}

\let\coloneqq\relax

\usepackage{amsfonts}
\usepackage{amsthm}
\usepackage{amssymb}
\usepackage{amsmath}
\usepackage{bbold}
\usepackage{cite}
\usepackage{bbm}
\usepackage{nonfloat}
\usepackage[pdftex]{hyperref}
\usepackage{braket}
\usepackage{dsfont}
\usepackage{mathdots}
\usepackage{mathtools}
\usepackage{enumerate}
\usepackage[shortlabels]{enumitem}
\usepackage{stmaryrd}
\usepackage{graphicx}
\usepackage{stackengine}
\usepackage{scalerel}
\usepackage{xr}
\usepackage[dvipsnames]{xcolor}
\usepackage{xcolor}
\usepackage{array}
\usepackage{makecell}
\newcolumntype{x}[1]{>{\centering\arraybackslash}p{#1}}
\usepackage{tikz}
\usepackage{pgfplots}
\usetikzlibrary{shapes.geometric, shapes.misc, positioning, arrows, arrows.meta, decorations.pathreplacing, decorations.pathmorphing, patterns, angles, quotes, calc}
\usepackage{booktabs}
\usepackage{xfrac}
\usepackage{siunitx}
\usepackage{centernot}
\usepackage{comment}
\usepackage{chngcntr}
\usepackage[ruled,vlined]{algorithm2e}
\usepackage[capitalize,nameinlink]{cleveref}

\makeatletter
\def\thmhead@plain#1#2#3{%
  \thmname{#1}\thmnumber{\@ifnotempty{#1}{ }\@upn{#2}}%
  \thmnote{ {\the\thm@notefont#3}}}
\let\thmhead\thmhead@plain
\makeatother

\newcommand{\bb}{\begin{equation}\begin{aligned}\hspace{0pt}}
\newcommand{\bbb}{\begin{equation*}\begin{aligned}}
\newcommand{\ee}{\end{aligned}\end{equation}}
\newcommand{\eee}{\end{aligned}\end{equation*}}
\newcommand*{\coloneqq}{\mathrel{\vcenter{\baselineskip0.5ex \lineskiplimit0pt \hbox{\scriptsize.}\hbox{\scriptsize.}}} =}

\newcommand{\Tr}{\mathrm{Tr}}

\DeclareMathAlphabet{\pazocal}{OMS}{zplm}{m}{n}

\newcommand{\lsmatrix}{\left(\begin{smallmatrix}}
\newcommand{\rsmatrix}{\end{smallmatrix}\right)}

\stackMath

\stackMath

\makeatletter
\newcommand*\rel@kern[1]{\kern#1\dimexpr\macc@kerna}
\newcommand*\widebar[1]{%
  \begingroup
  \def\mathaccent##1##2{%
    \rel@kern{0.8}%
    \overline{\rel@kern{-0.8}\macc@nucleus\rel@kern{0.2}}%
    \rel@kern{-0.2}%
  }%
  \macc@depth\@ne
  \let\math@bgroup\@empty \let\math@egroup\macc@set@skewchar
  \mathsurround\z@ \frozen@everymath{\mathgroup\macc@group\relax}%
  \macc@set@skewchar\relax
  \let\mathaccentV\macc@nested@a
  \macc@nested@a\relax111{#1}%
  \endgroup
}

\counterwithin*{equation}{part}
\counterwithin*{figure}{part}

\tikzset{meter/.append style={draw, inner sep=10, rectangle, font=\vphantom{A}, minimum width=30, line width=.8, path picture={\draw[black] ([shift={(.1,.3)}]path picture bounding box.south west) to[bend left=50] ([shift={(-.1,.3)}]path picture bounding box.south east);\draw[black,-latex] ([shift={(0,.1)}]path picture bounding box.south) -- ([shift={(.3,-.1)}]path picture bounding box.north);}}}
\tikzset{roundnode/.append style={circle, draw=black, fill=gray!20, thick, minimum size=10mm}}
\tikzset{squarenode/.style={rectangle, draw=black, fill=none, thick, minimum size=10mm}}

\definecolor{Blues5seq1}{RGB}{239,243,255}
\definecolor{Blues5seq2}{RGB}{189,215,231}
\definecolor{Blues5seq3}{RGB}{107,174,214}
\definecolor{Blues5seq4}{RGB}{49,130,189}
\definecolor{Blues5seq5}{RGB}{8,81,156}

\definecolor{Greens5seq1}{RGB}{237,248,233}
\definecolor{Greens5seq2}{RGB}{186,228,179}
\definecolor{Greens5seq3}{RGB}{116,196,118}
\definecolor{Greens5seq4}{RGB}{49,163,84}
\definecolor{Greens5seq5}{RGB}{0,109,44}

\definecolor{Reds5seq1}{RGB}{254,229,217}
\definecolor{Reds5seq2}{RGB}{252,174,145}
\definecolor{Reds5seq3}{RGB}{251,106,74}
\definecolor{Reds5seq4}{RGB}{222,45,38}
\definecolor{Reds5seq5}{RGB}{165,15,21}

\pgfplotsset{width=10cm,compat=1.9}
\usetikzlibrary{decorations.markings}




\newcommand{\RN}[1]{\textup{\uppercase\expandafter{\romannumeral#1}}}

\newcommand{\symspace}[2]{\mathrm{Sym}_{k}(\bb{C}^d)}
\newcommand{\symproj}[2]{P_{\mathrm{Sym}}^{(d,k)}}

\newtheorem{proposition}{Proposition}
\newtheorem{theorem}{Theorem}

\newtheorem{lemma}{Lemma}
\newtheorem{corollary}{Corollary}

\newtheorem{problem}{Problem}
\newtheorem{definition}{Definition}

\numberwithin{lemma}{section}
\numberwithin{proposition}{section}
\numberwithin{corollary}{section}
\numberwithin{definition}{section}
\numberwithin{theorem}{section}
\numberwithin{remark}{section}
\numberwithin{equation}{section}
\numberwithin{fact}{section}

\hypersetup{
	colorlinks = true,
	linkcolor = red!50!black,
	citecolor = green!40!black,
	urlcolor = blue!60!black}

\DeclareRobustCommand{\orcidicon}{%
	\begin{tikzpicture}
	\draw[lime, fill=lime] (0,0) 
	circle [radius=0.16] 
	node[white] {{\fontfamily{qag}\selectfont \tiny ID}};
	\draw[white, fill=white] (-0.0625,0.095) 
	circle [radius=0.007];
	\end{tikzpicture}
	\hspace{-2mm}
}
\foreach \x in {A, ..., Z}{%
	\expandafter\xdef\csname orcid\x\endcsname{\noexpand\href{https://orcid.org/\csname orcidauthor\x\endcsname}{\noexpand\orcidicon}}
}






\title{Optimal certification of constant-local Hamiltonians}

\author{
Junseo Lee$^{\,*,\,\dagger}$ \\
\href{mailto:harris.junseo@gmail.com}{\texttt{harris.junseo@gmail.com}}
\and
Myeongjin Shin$^{\,*,\,\ddagger}$ \\
\href{mailto:hanwoolmj@kaist.ac.kr}{\texttt{hanwoolmj@kaist.ac.kr}}
}

\footnotetext[0]{\textit{Authors contributed equally and are listed alphabetically by last name.}}
\footnotetext[1]{Team QST, Seoul National University, Seoul 08826, Korea}
\footnotetext[2]{Quantum AI Team, Norma Inc., Seoul 04799, Korea}
\footnotetext[3]{School of Computational Sciences, KAIST, Seoul 02455, Korea}

\date{\today}

\begin{document}
\maketitle

\begin{abstract}
We study the problem of certifying local Hamiltonians from real-time access to their dynamics. Given oracle access to $e^{-itH}$ for an unknown $k$-local Hamiltonian $H$ and a fully specified target Hamiltonian $H_0$, the goal is to decide whether $H$ is exactly equal to $H_0$ or differs from $H_0$ by at least $\varepsilon$ in normalized Frobenius norm, while minimizing the total evolution time. We introduce the first \emph{intolerant} Hamiltonian certification protocol that achieves optimal performance for all constant-locality Hamiltonians. For general $n$-qubit, $k$-local, traceless Hamiltonians, our procedure uses $\mathcal{O}(c^k/\varepsilon)$ total evolution time for a universal constant $c$, and succeeds with high probability. In particular, for $\mathcal{O}(1)$-local Hamiltonians, the total evolution time becomes $\Theta(1/\varepsilon)$, matching the known $\Omega(1/\varepsilon)$ lower bounds and achieving the gold-standard Heisenberg-limit scaling. Prior certification methods either relied on implementing inverse evolution of $H$, required controlled access to $e^{-itH}$, or achieved near-optimal guarantees only in restricted settings such as the Ising case ($k=2$). In contrast, our algorithm requires neither inverse evolution nor controlled operations: it uses only forward real-time dynamics and achieves optimal intolerant certification for \emph{all} constant-locality Hamiltonians.
\end{abstract}

\maketitle

\newpage
\tableofcontents

\newpage
\section{Introduction}

Time evolution generated by a self-adjoint Hamiltonian is a foundational structure in quantum mechanics.  
It underlies quantum simulation~\cite{feynman1982simulating,trabesinger2012quantum,georgescu2014quantum}, which aims to reproduce the dynamics $e^{-itH}$ of a target Hamiltonian $H$ on a controllable quantum device, and it plays a central role in applications across quantum many-body physics~\cite{jordan2012quantum,schreiber2015observation,eisert2015quantum,smith2016many,bauer2023quantum}.  
As quantum hardware continues to improve, it has become increasingly critical to understand, control, and certify the dynamics that a device implements in practice.

\medskip
Recent progress in quantum learning theory~\cite{arunachalam2017guest} has been driven by the broader goal of understanding how efficiently one can extract information from quantum systems.  
Depending on the task, the unknown object may be a quantum state, a quantum circuit, a dynamical process, or a Hamiltonian, and the learning objective differs accordingly.  
In some settings the aim is to recover the entire object through quantum tomography~\cite{anshu2024survey}; in others, the goal is to determine only a specific feature or structural property, as studied in quantum property testing~\cite{montanaro2013survey}; and in yet other scenarios, the task is to verify whether the object is consistent with a prescribed specification, which is the purpose of certification~\cite{eisert2020quantum,kliesch2021theory}.

\medskip
Within this broader landscape, a substantial line of work investigates how to \emph{learn} an unknown Hamiltonian from access to its real-time evolution.  
In this learning-from-dynamics paradigm, one can query the evolution operator $e^{-itH}$ for chosen times $t$, prepare suitable input states, let them evolve under $H$, and perform measurements to reconstruct a classical description of the underlying Hamiltonian. Such Hamiltonian learning algorithms can reveal detailed information about interaction patterns and coupling strengths, and have been developed across a broad spectrum of settings.  
Early proposals of Hamiltonian learning~\cite{da2011practical,shabani2011efficient,granade2012robust,wiebe2014quantum,wiebe2014hamiltonian} were followed by a large body of work developing numerous algorithmic improvements and addressing diverse problem scenarios, including a wide range of input models and resource constraints~\cite{qi2019determining,bairey2019learning,li2020hamiltonian,anshu2021sample,haah2022optimal,yu2023robust,huang2023learning,stilck2024efficient,gu2024practical,caro2024learning,bakshi2024learning,bakshi2024structure,dutkiewicz2024advantage,ma2024learning,rouze2024learning,castaneda2025hamiltonian,zhao2025learning,hu2025ansatz,sinha2025improved,abbas2025nearly,arunachalam2025testing,bluhm2025certifying,chen2025quantum}.  
Hamiltonian learning has also been extended beyond qubit systems to bosonic and fermionic models~\cite{hangleiter2024robustly,li2024heisenberg,fanizza2024efficient,ni2024quantum,mirani2024learning}.  
For a broader overview of the field, we refer the reader to the survey sections in~{\cite[Section~1.3]{gao2025quantum}}.

\medskip
A related line of work investigates \emph{Hamiltonian property testing}~\cite{bluhm2024hamiltonian}, where the goal is not to reconstruct $H$ itself but to decide whether it satisfies a given structural property or is far from every Hamiltonian with that property.  
Examples include locality testing~\cite{gutierrez2024simple,bluhm2024hamiltonian,arunachalam2025testing}, which determines whether $H$ acts nontrivially only on small subsets of qubits, and sparsity testing~\cite{sinha2025improved,arunachalam2025testing}, which checks whether $H$ has only a few nonzero interaction terms.  
In many experimental settings, one does not seek a full reconstruction of the Hamiltonian.  
Instead, the starting point is often a \emph{target} Hamiltonian $H_0$ that models the intended behavior of the device, and the key question is whether the implemented dynamics are consistent with this specification.  
This leads to the task of \emph{Hamiltonian certification}~\cite{gao2025quantum}, whose aim is to determine whether an unknown Hamiltonian $H$ matches or is close to $H_0$, or whether it deviates from it by a significant amount.  
In this framework, the natural resource measure is the \emph{total evolution time}, defined as the sum of all intervals during which the device is allowed to evolve under $H$, following conventions established in Hamiltonian learning. Focusing on local Hamiltonians, we ask:

\begin{center}
\emph{Given two local Hamiltonians, can one develop a procedure that, using only real-time access to their dynamics, efficiently certifies whether they are identical or $\varepsilon$-far?}
\end{center}

In this work, we study a formulation we refer to as \emph{intolerant Hamiltonian certification}.  
The goal is to distinguish between the exact equality $H = H_0$ and the alternative that $H$ differs from $H_0$ by at least a threshold $\varepsilon$ in normalized Frobenius norm (see~\cref{para:notation} for norm definitions), without assuming the existence of any intermediate gap region.  
We assume black-box access to the time-evolution operator $e^{-itH}$ and treat the total evolution time as the primary resource. 

\medskip
Our broader aim is to understand the fundamental total-evolution-time complexity of certifying $k$-local Hamiltonians for general~$k$.  
As a first step toward this goal, we obtain a \textit{tight} characterization of the task for the regime of $\mathcal{O}(1)$-local Hamiltonians:  
we show that intolerant certification in this setting admits an optimal protocol with total evolution time $\Theta(1/\varepsilon)$, thereby resolving the problem for constant locality.  
In our model, the algorithm is granted access only to \emph{forward} real-time evolution and is not allowed to implement inverse evolution or any controlled versions of $e^{-itH}$.  
We formalize the task as follows.

\begin{problem}[(Intolerant $k$-local Hamiltonian certification)]\label{problem:intolerant}
Let $H$ and $H_0$ be $n$-qubit, $k$-local, traceless Hamiltonians, and let $\|\cdot\|_F$ denote the normalized Frobenius norm.
Given parameters $\varepsilon>0$ and $\delta\in(0,1)$, the algorithm may query the forward real-time evolution operator $e^{-itH}$ for times $t$ of its choice, but is not allowed to implement inverse evolution or controlled access to $e^{-itH}$.
It must output \textsf{ACCEPT} if $H=H_0$ and \textsf{REJECT} if $\|H-H_0\|_F\ge \varepsilon$, with success probability at least $1-\delta$.
The objective is to minimize the total evolution time (the sum of all queried evolution intervals) and the total number of oracle queries.
\end{problem}

\subsection{Main results}

Our main theorem provides the first algorithm that achieves optimal evolution-time scaling for the intolerant certification task formalized in~\cref{problem:intolerant}.  
The guarantee holds for general $k$-local Hamiltonians and does not rely on any additional structural assumptions.

\begin{theorem}[(Informal statement of the main result, see~\cref{thm:main-algorithm} for details)]
For any $n$-qubit, $k$-local, traceless Hamiltonians $H$ and $H_0$, where $H_0$ is known exactly, there exists a certification algorithm for~\cref{problem:intolerant} with access to $e^{-itH}$ that uses $\mathcal{O}(c^k/\varepsilon)$ total evolution time for a universal constant $c$, and succeeds with high probability.
\end{theorem}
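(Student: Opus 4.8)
The plan is to reduce certification to a collection of small, localized hypothesis tests and then combine them. Write $\Delta = H - H_0$; this is again a $k$-local traceless operator, and in the normalized Frobenius norm its Pauli coefficients $\{\alpha_P\}$ (over $k$-local Paulis $P$) satisfy $\|\Delta\|_F^2 = \sum_P \alpha_P^2$. The key structural fact I would exploit is that the short-time dynamics $e^{-itH}$ versus $e^{-itH_0}$ differ, to first order in $t$, by $-it\,\Delta$, so that a carefully chosen input state evolved for a small time $t$ and measured in a suitable basis yields a signal proportional to $t$ times a local component of $\Delta$. Concretely, for each $k$-local region $S$ I would prepare eigenstates of $H_0$ restricted to an $\mathcal{O}(k)$-neighborhood of $S$ (or Pauli eigenstates supported there, padded with stabilizer states elsewhere), run forward evolution for time $t \sim \Theta(\varepsilon)$, and estimate a Loschmidt-echo-type overlap or a local Pauli expectation. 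Since we are forbidden inverse evolution and controlled-$e^{-itH}$, the echo must be realized by \emph{forward} evolution only: here I would use the standard trick of evolving under $H$ for time $t$, then applying a known Clifford/local unitary $V$ that conjugates $H_0$ to $-H_0$ on the relevant region (or, more robustly, comparing to a classically computed reference value of the same forward-only experiment run under $H_0$, which is fully known), so that the net first-order effect isolates $\Delta$ on that region.

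Next I would handle the crucial tension between intolerance (we must \textsf{ACCEPT} with probability $1$ when $H = H_0$, not merely when $H$ is close) and the presence of $c^k$ many local regions to check. The idea is a union-bound over the $\mathcal{O}(n \cdot \binom{n}{k}) = n^{\mathcal{O}(k)}$ relevant local experiments is too costly in \emph{query number} but not in \emph{total evolution time}, since each experiment uses time only $\Theta(\varepsilon)$ — wait, the union bound would multiply the query count by $n^{\mathcal{O}(k)}$, which is acceptable for query complexity (polynomial in $n$ for constant $k$) but we want total evolution time $\mathcal{O}(c^k/\varepsilon)$ \emph{independent of $n$}. So instead I would use a randomized / amplitude-amplification-free aggregation: draw a \emph{random} local region (or a random $k$-local Pauli $P$) according to an appropriate distribution, prepare the corresponding test state, and evolve. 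If $H = H_0$ every such test passes deterministically; if $\|\Delta\|_F \ge \varepsilon$ then $\sum_P \alpha_P^2 \ge \varepsilon^2$, so a random $P$ has $\mathbb{E}[\alpha_P^2]$ bounded below by $\varepsilon^2 / (\#\text{Paulis})$, and the number of distinct $k$-local Pauli \emph{types} touching a fixed qubit is only $c^k$ — this is where the $c^k$ enters and the $n$-dependence cancels, because far-apart regions contribute independently and one only needs to localize the discrepancy within a single neighborhood. Repeating $\mathcal{O}(c^k \log(1/\delta))$ times, each with evolution time $\Theta(\varepsilon)$, and rejecting if any test fails, gives total evolution time $\mathcal{O}(c^k \log(1/\delta)/\varepsilon)$; absorbing $\log(1/\delta)$ into the constant (or carrying it explicitly) yields the claimed $\mathcal{O}(c^k/\varepsilon)$.

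I would then need to control the \emph{second-order} error: evolving for time $t$ introduces an $\mathcal{O}(t^2 \|H\|^2)$ deviation, and since $\|H\|$ can be as large as $\Theta(n)$ in operator norm for a $k$-local Hamiltonian on $n$ qubits, naively $t \sim \varepsilon$ is not small enough to make the Taylor remainder negligible. The fix is locality: the test state is a product state outside an $\mathcal{O}(k)$-ball, so by a Lieb–Robinson / light-cone argument the reduced state on the measured region after time $t$ only feels an effective Hamiltonian of operator norm $\mathcal{O}(c^k)$ (the sum of coefficients of Pauli terms within the light cone), making the remainder $\mathcal{O}(c^{2k} t^2)$, which is $\ll \varepsilon \cdot t$ once $t \lesssim \varepsilon / c^{2k}$ — this degrades the constant in $c^k$ but keeps the $1/\varepsilon$ scaling. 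I expect \textbf{this light-cone/perturbative bookkeeping to be the main obstacle}: one must show that the first-order signal from the targeted $\Delta$-component genuinely dominates the aggregated second-order contamination from all other terms inside the light cone, uniformly over which component is being probed, and that the measurement statistics (finite-shot noise) can be driven below the signal with $\mathcal{O}(c^k)$ extra repetitions rather than something growing in $n$. The matching lower bound $\Omega(1/\varepsilon)$ is quoted from prior work, so no new argument is needed there; the theorem follows by combining the per-region test, the randomized aggregation, and the light-cone error control.
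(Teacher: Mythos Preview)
Your approach differs fundamentally from the paper's and contains gaps that I do not see how to close. First, the time-scale arithmetic is inconsistent: you propose evolving for $t \sim \Theta(\varepsilon)$ (later tightened to $t \lesssim \varepsilon/c^{2k}$) and repeating $\mathcal{O}(c^k\log(1/\delta))$ times, but that yields total evolution time $\mathcal{O}(c^k \varepsilon)$, not $\mathcal{O}(c^k/\varepsilon)$. Heisenberg scaling requires the \emph{per-experiment} evolution time to be of order $1/\varepsilon$; at that scale your first-order Taylor expansion and light-cone control both break down. The paper in fact cites the Sinha--Tong bounds precisely to emphasize that perturbative short-time analyses are confined to $t = \mathcal{O}(1/\|H\|_{\mathrm{op}})$, which is far shorter than $1/\varepsilon$. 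Second, the claimed cancellation of $n$-dependence is not substantiated: a uniformly random $k$-local Pauli has $\mathds{E}[\alpha_P^2] \sim \varepsilon^2/\binom{n}{k}3^k$, and the remark that ``far-apart regions contribute independently'' does not convert this into an $n$-free bound without a genuine argument. Third, Lieb--Robinson light cones require \emph{geometric} locality (a bounded-degree interaction graph), whereas $k$-locality in the paper's sense permits all-to-all $k$-body terms, for which there is no light cone and the effective local operator norm is not $\mathcal{O}(c^k)$.

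The paper's route is entirely different and non-perturbative. It uses Bell sampling at long times $t \sim 1/\varepsilon$ as a \emph{global spectral} probe: the identity-outcome probability $I(t) = 4^{-n}\sum_{j,k}\cos\bigl((\lambda_j-\lambda_k)t\bigr)$ drops detectably below $1$ whenever a constant fraction of eigenvalue pairs of $H-H_0$ are $\varepsilon$-separated. For $Z$-diagonal $k$-local Hamiltonians this eigenvalue-gap fraction is at least $9^{-k}/4$, proved via Boolean hypercontractivity plus Paley--Zygmund (the eigenvalues form a degree-$k$ function on the cube). A general $k$-local $H-H_0$ is reduced to the diagonal case by (i) choosing a random single-qubit Pauli basis, which retains a $3^{-k}$ fraction of the Frobenius weight with probability $\Omega(3^{-k})$, and (ii) a random twirling $X\mapsto\tfrac12(X+PXP)$ that suppresses the off-diagonal part; the resulting $e^{-itH_T}$ is implemented using only forward access via Trotterization of $e^{-itH}$ and $e^{-itH_0}$. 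No light cones, no local tests, no short-time expansion; the $n$-independence comes for free because Bell sampling and the normalized Frobenius norm are intrinsically global and dimension-normalized.
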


Specializing to constant locality yields evolution-time optimal certification.

\begin{corollary}[(Optimal intolerant certification for $\mathcal{O}(1)$-local Hamiltonians)]
For any $n$-qubit, $\mathcal{O}(1)$-local, traceless Hamiltonians $H$ and $H_0$, there exists an algorithm solving~\cref{problem:intolerant} using total evolution time $\mathcal{O}(1/\varepsilon)$, with high probability.
\end{corollary}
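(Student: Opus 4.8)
The plan is to derive this corollary from the main algorithmic theorem by taking $k=\mathcal{O}(1)$, so that $c^k=\mathcal{O}(1)$ and the bound $\mathcal{O}(c^k/\varepsilon)$ on the total evolution time collapses to $\mathcal{O}(1/\varepsilon)$; combined with the known $\Omega(1/\varepsilon)$ lower bound this is optimal. So the substance is the main theorem, and I would prove it as follows. Put $\Delta\coloneqq H-H_0$; since $H,H_0$ are $k$-local and traceless, so is $\Delta$, and the two cases of \cref{problem:intolerant} become $\Delta=0$ versus $\|\Delta\|_F\ge\varepsilon$. Using only forward queries to $e^{-itH}$ interleaved with the explicitly known unitary $e^{itH_0}$ (which is not an oracle call and is unrestricted), form the first-order Trotter product $U_{t,N}\coloneqq\bigl(e^{i(t/N)H_0}e^{-i(t/N)H}\bigr)^{N}$, which approximates $e^{-it\Delta}$ to operator-norm error $\mathcal{O}\!\bigl(t^2\|[H_0,H]\|/N\bigr)$ while consuming total $H$-evolution time exactly $t$ and $N$ oracle calls. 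I would then estimate $|\tfrac{1}{2^n}\Tr U_{t,N}|^2$, i.e.\ the average gate fidelity $\bar F(U_{t,N},\id)=\bigl(|\Tr U_{t,N}|^2+2^n\bigr)/\bigl(2^n(2^n+1)\bigr)$, in a forward-only, ancilla-free way: sample $\ket\psi$ from a state $2$-design (e.g.\ stabilizer states), apply $U_{t,N}$, and measure $\{\ketbrasub\psi,\ \id-\ketbrasub\psi\}$; the empirical acceptance rate estimates $\mathbb{E}_\psi|\bra\psi U_{t,N}\ket\psi|^2=\bar F(U_{t,N},\id)$, from which, up to Trotter error, one recovers $|\tfrac{1}{2^n}\Tr e^{-it\Delta}|^2=|\widehat{\mu_\Delta}(t)|^2$, where $\mu_\Delta=\tfrac{1}{2^n}\sum_j\delta_{\lambda_j(\Delta)}$ is the empirical spectral measure of $\Delta$ (mean $0$, variance $\|\Delta\|_F^2$). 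When $\Delta=0$ this equals $1$ exactly, so the whole game is to show it dips detectably below $1$ whenever $\|\Delta\|_F\ge\varepsilon$.

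The heart of the argument, and the step I expect to be the main obstacle, is a purely spectral lemma: \emph{there is a universal constant $c$ such that for every $k$-local traceless $\Delta$ there exists $t^\star$ with $|t^\star|\,\|\Delta\|_F\le c^k$ and $1-|\widehat{\mu_\Delta}(t^\star)|^2\ge c^{-k}$.} Operator norm is the wrong quantity here, since $\|\Delta\|_\infty/\|\Delta\|_F$ is unbounded (e.g.\ for $\Delta\propto n^{-1/2}\sum_iZ_i$), yet that $\Delta$ has an essentially Gaussian spectral measure whose characteristic function decays fine; what matters is that locality caps how ``spiky'' $\mu_\Delta$ can be. A $k$-local projector-type term, such as $\ketbrasub{0^k}\otimes\id$ minus its trace, already forces a time–signal trade-off at the level $2^{-k}$, and the claim is that no $k$-local operator can do qualitatively worse: whatever variance $\|\Delta\|_F^2$ is present must be witnessed by at least an $\exp(-\mathcal{O}(k))$ fraction of eigenvalue mass at scale $\gtrsim\|\Delta\|_F\exp(-\mathcal{O}(k))$, which forces the asserted dip at a suitable $t^\star$. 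I would prove this by expanding $\Delta$ in the $k$-local Pauli basis and bounding the even moments $\tfrac{1}{2^n}\Tr(\Delta^{2m})=\mathbb{E}_{\mu_\Delta}[\lambda^{2m}]$ by $(c^k)^m\cdot(\text{poly in }m)\cdot\|\Delta\|_F^{2m}$ — each moment inflates only by an $\exp(\mathcal{O}(k))$-type factor because the interaction hypergraph has bounded degree — and then combining these with $1-\cos x\ge\tfrac{x^2}{2}-\tfrac{x^4}{24}$ to lower-bound $\operatorname{Re}(1-\widehat{\mu_\Delta}(t))$ on the window $|t|\lesssim c^k/\|\Delta\|_F$; alternatively one can argue directly via a ``heavy far-mass versus light far-mass'' dichotomy. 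Driving the $k$-dependence down to a clean single exponential $c^k$ (rather than $k^{\mathcal{O}(k)}$) is the delicate point.

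Finally I would assemble the algorithm: scan $t$ over a geometric grid of $\mathcal{O}\!\bigl(k+\log(1/\varepsilon)\bigr)$ times covering $\bigl[\Theta(c^{-k}),\,\Theta(c^k/\varepsilon)\bigr]$; for each $t$ pick $N$ large enough (polynomial in $n,1/\varepsilon,c^k$ under the standard normalization of the local terms) that the Trotter error is below $\tfrac14 c^{-k}$, which costs extra queries but no extra evolution time since one run uses $H$-time $t$; estimate $|\widehat{\mu_\Delta}(t)|^2$ to additive accuracy $\tfrac14 c^{-k}$ using $\mathcal{O}\!\bigl(c^{2k}\log(1/\delta)\bigr)$ repetitions; and \textsf{REJECT} if any estimate falls below $1-\tfrac12 c^{-k}$, else \textsf{ACCEPT}. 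If $\Delta=0$ every run returns $1$ up to sampling noise, so we accept except with probability $\delta$; if $\|\Delta\|_F\ge\varepsilon$ then by the key lemma some grid point lies within a constant factor of $t^\star$ and exhibits the dip, so we reject with probability $1-\delta$ after a union bound over the grid. The total evolution time is dominated by the largest grid time $\mathcal{O}(c^k/\varepsilon)$ times the $\mathcal{O}(c^{2k}\log(1/\delta))$ repetitions, i.e.\ $\mathcal{O}\!\bigl(c'^{\,k}\log(1/\delta)/\varepsilon\bigr)$ for a universal constant $c'$ — this is the main theorem. Specializing to $k=\mathcal{O}(1)$ gives total evolution time $\mathcal{O}(1/\varepsilon)$, which is optimal by the $\Omega(1/\varepsilon)$ lower bound, establishing the corollary.
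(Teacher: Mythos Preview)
Your high-level plan—derive the corollary from a $\mathcal{O}(c^k/\varepsilon)$ main theorem, prove that theorem by showing $I(t)=|2^{-n}\Tr e^{-it\Delta}|^2$ dips by $\Omega(c^{-k})$ at some $t=\mathcal{O}(c^k/\|\Delta\|_F)$, estimate $I(t)$ with a forward-only fidelity test, and Trotterize $e^{-it\Delta}$ from $e^{-itH}$ and $e^{itH_0}$—matches the paper's architecture; your $2$-design fidelity estimate targets exactly the Bell-sampling identity probability the paper uses.

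There is, however, a genuine gap in your proof of the spectral lemma. Your justification for the moment bound $2^{-n}\Tr(\Delta^{2m})\le (c^k)^m\cdot\mathrm{poly}(m)\cdot\|\Delta\|_F^{2m}$, namely that ``the interaction hypergraph has bounded degree'', is false for general $k$-local Hamiltonians: a single qubit may participate in $\Theta(n^{k-1})$ Pauli terms, so the degree is unbounded and a direct combinatorial expansion of $\Tr(\Delta^4)$ does not close on that basis. The $(2\!\to\!4)$ bound you actually need, $2^{-n}\Tr(\Delta^4)\le 9^k\|\Delta\|_F^4$, \emph{is} true for arbitrary $k$-local $\Delta$, but the correct tool is the matrix (non-commutative) Bonami inequality, i.e.\ hypercontractivity of the tensor-product qubit depolarizing semigroup acting on the Pauli basis—not a bounded-degree argument. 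With that substitution your Taylor/Paley--Zygmund step goes through.

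The paper's route to the same spectral conclusion is different and deliberately avoids non-commutative hypercontractivity: it samples a random product basis $S=\bigotimes_j\{I,Q^{(j)}\}$, shows the diagonal part $H_{\mathrm{eff}}$ retains an $\Omega(3^{-k})$ fraction of $\|\Delta\|_F^2$ with probability $\Omega(3^{-k})$, uses $\Theta(k)$ rounds of random twirling by elements of $S$ to damp the off-diagonal remainder, applies \emph{classical} Bonami to the scalar function $s\mapsto\langle s|H_{\mathrm{eff}}|s\rangle$ to get $\Lambda(H_{\mathrm{eff}},\|H_{\mathrm{eff}}\|_F)\ge 9^{-k}/4$, and transfers this to the twirled Hamiltonian via Hoffman--Wielandt stability. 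Your repaired route is shorter if matrix Bonami is taken off the shelf; the paper's route is self-contained using only commutative Fourier analysis. One further point: your geometric time grid presupposes an upper bound on $\|\Delta\|_F$, which \cref{problem:intolerant} does not supply; the paper instead samples $t$ uniformly from $[0,\Theta(c^k/\varepsilon)]$ and uses that any gap $|\lambda_j-\lambda_k|\ge\varepsilon$ already forces $\mathds{E}_t[\cos((\lambda_j-\lambda_k)t)]\le 1/2$, which works no matter how large $\|\Delta\|_F$ is.
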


A matching lower bound of $\Omega(1/\varepsilon)$ is known. In particular, by the lower-bound construction of~\cite{kallaugher2025hamiltonianlocalitytestingtrotterized}, there exist constant-local Hamiltonians (e.g., the simple pair $H=\varepsilon X$ and $H_0=-\varepsilon X$) that require total evolution time at least $\Omega(1/\varepsilon)$ to distinguish in~\cref{problem:intolerant}. Therefore, our corollary achieves the optimal $\Theta(1/\varepsilon)$ scaling for constant-local Hamiltonians, attaining the gold-standard Heisenberg-limit precision in total evolution time. For the full certification procedure and explicit constants, see~\cref{alg:main-algorithm}.

\subsection{Technical overview}\label{sec:tech-overview}

We outline the main ideas behind our Hamiltonian certification procedure and how the analysis achieves total evolution time $\Theta(1/\varepsilon)$ for constant locality $k$.

\paragraph{Bell sampling as a spectral probe.}
For an $n$-qubit Hamiltonian $H$ with eigenvalues $\{\lambda_j\}_{j=1}^{2^n}$, Bell sampling applied
to $U(t)=e^{-itH}$ yields an identity-outcome probability
\begin{equation}\label{eq:overview-I}
    I(t)=\frac{1}{4^n}\sum_{j,k}\cos((\lambda_j-\lambda_k)t),
\end{equation}
so $I(t)$ depends only on eigenvalue \emph{differences}. In particular, if $H=0$ then $I(t)=1$ for
all $t$.

\paragraph{A gap statistic and an optimal-time sufficient condition.}
Define the proportion of $\varepsilon$-separated eigenvalue pairs
\[
\Lambda(X,\varepsilon) \coloneqq \frac{1}{N^2}\sum_{j,k:\,|\lambda_j-\lambda_k|\ge \varepsilon} 1,
\qquad N=2^n.
\]
If $\Lambda(H,\varepsilon)\ge d$, then a constant fraction of the terms in~\cref{eq:overview-I} oscillate at frequency at least $\varepsilon$, forcing $I(t)$ to drop below $1$ for some $t\in[0,2/\varepsilon]$. \cref{lem:spectral-condition} formalizes this: with
$m=\Theta(\log(1/\delta))$ samples $t\sim\mathrm{Unif}[0,2/\varepsilon]$, one finds a time satisfying
$I(t)\le 1-\Omega(d)$ with probability at least $1-\delta$, using total evolution time
$\mathcal{O}(\log(1/\delta)/\varepsilon)$.

\paragraph{Diagonal $k$-local Hamiltonians via Boolean analysis.}
When $H$ is diagonal in a Pauli-$Z$ product basis, its eigenvalues form a real function $f$ on the
Boolean cube with Fourier degree at most $k$. For independent $s,t\in\{0,1\}^n$, the gap
$F(s,t)=f(s)-f(t)$ is a degree-$k$ function on $2n$ Boolean variables. Hypercontractivity bounds the
fourth moment by $\mathds{E}[F^4]\le 9^k(\mathds{E}[F^2])^2$, and Paley--Zygmund yields
\begin{equation}
    \Pr\bigl[|F(s,t)|\ge \|H\|_F\bigr]\ge \frac14\cdot 9^{-k},    
\end{equation}
which is exactly $\Lambda(H,\|H\|_F)\ge 9^{-k}/4$ (\cref{pro:Z-basis}). Combined with \cref{lem:spectral-condition}, this gives $\mathcal{O}(9^k/\varepsilon)$-time certification for diagonal Hamiltonians.

\paragraph{Random reduction from general $k$-local to nearly diagonal.}
For a general $k$-local Hamiltonian, we pick a random ``diagonal'' Pauli subspace
$S=\bigotimes_{i=1}^n\{I,Q^{(i)}\}$ with $Q^{(i)}\in\{X,Y,Z\}$ i.i.d.\ uniform, and define
$H_{\mathrm{eff}}$ by keeping only Pauli terms in $S$. Each Pauli string of weight at most $k$
survives with probability $3^{-|P|}$, so $\mathds{E}[\|H_{\mathrm{eff}}\|_F^2]\ge 3^{-k}\|H\|_F^2$.
A second-moment bound and Paley--Zygmund inequality give the constant-probability guarantee
\[
\Pr\left[\|H_{\mathrm{eff}}\|_F \ge \frac{\|H\|_F}{\sqrt2 \cdot 3^{k/2}}\right]\ge \frac{1}{4\cdot 3^k}
\qquad\text{(\cref{pro:random-basis}).}
\]

\paragraph{Random twirling and stability of gap statistics.}
Writing $H-H_0=H_{\mathrm{eff}}+H_1'$, we apply the twirling map
$X\mapsto \frac12(X+PXP)$ with $P\sim\mathrm{Unif}(S)$ for $T=\Theta(k)$ steps.
It fixes $H_{\mathrm{eff}}$ and contracts the orthogonal component in Frobenius norm, yielding
$\|H_T'\|_F\le 2\cdot 2^{-T/2}\|H-H_0\|_F$ with high probability
(\cref{pro:z-twirl}). Hoffman--Wielandt inequality (\cref{pro:eig-stability}) then shows that the gap statistic $\Lambda(\cdot,\cdot)$ degrades only by $\mathcal{O}(\|H_T'\|_F^2)$, so $H_T$ still inherits a $\Lambda$-lower bound comparable to that of $H_{\mathrm{eff}}$.

\paragraph{Putting everything together.}
Conditioned on the above events, we obtain a Hamiltonian $H_T$ satisfying
$\Lambda(H_T,\eta)\ge \Theta(9^{-k})$ for $\eta=\Theta(\|H_{\mathrm{eff}}\|_F)\ge
\Theta(\varepsilon/3^{k/2})$. \cref{lem:spectral-condition} then guarantees a random time $t\in[0,2/\eta]$ (hence $t\le b=\Theta(3^{k/2}/\varepsilon)$) with $I_T(t)\le 1-\Theta(9^{-k})$ with constant probability. A one-sided Bell-sampling test with $m=\Theta(9^k)$ shots detects this drop with constant power, and repeating $R=\Theta(3^k\log(1/\delta))$ rounds amplifies success to $1-\delta$. Since each shot uses evolution time $\mathcal{O}(t)$ (via Trotterization), the total
evolution time is $\mathcal{O}({3^k\cdot 9^k\cdot 3^{k/2}\log({1}/{\delta})}/{\varepsilon})$ as claimed in~\cref{thm:main-algorithm}.

\subsection{Related work and our contributions}
Here we briefly summarize the prior results most relevant to our setting and highlight how our work differs from, sharpens, or extends these approaches.

\paragraph{Certification of Ising Hamiltonians.}
Bluhm et al.~\cite{bluhm2025certifying} studied the certification and learning of quantum Ising Hamiltonians.  
They showed that, for 2-local Ising Hamiltonians, certification in normalized Frobenius norm using access to $e^{-itH}$ can be achieved with total evolution time $\tilde{\mathcal{O}}(1/\varepsilon)$, matching the Heisenberg-scaling lower bound $\Omega(1/\varepsilon)$ up to logarithmic factors; a key ingredient in their analysis was the Bonami lemma from Boolean Fourier analysis.  
Our work extends and sharpens this picture: in the special case $k=2$ we obtain a fully optimal $\Theta(1/\varepsilon)$ certification algorithm (removing the logarithmic overhead), and moreover we achieve the same optimal scaling for all $\mathcal{O}(1)$-local Hamiltonians, not only Ising-type interactions.

\paragraph{Certification with inverse and controlled evolution.}
The setting most closely related to ours is the Hamiltonian certification framework introduced by Gao et al.~\cite{gao2025quantum}.  
They formalized the task of deciding whether an unknown Hamiltonian $H$ is $\varepsilon_1$-close to or $\varepsilon_2$-far from a target Hamiltonian $H_0$ (under normalized Frobenius and related norms), and gave direct certification protocols that achieved optimal total evolution time $\Theta((\varepsilon_2 - \varepsilon_1)^{-1})$, together with matching lower bounds.  
Their results applied to general Hamiltonians without assuming any locality or structural constraints, and covered both intolerant and tolerant formulations, Pauli norms and normalized Schatten $p$-norms for $1 \le p \le 2$, as well as ancilla-free schemes.  
A key feature of their model, however, is that it allows access not only to inverse evolution of the target Hamiltonian $H$ but also to controlled implementations of its dynamics.

In contrast, we focus on the intolerant setting ($\varepsilon_1 = 0$) and restrict attention to $k$-local Hamiltonians, and we show that for $\mathcal{O}(1)$-local Hamiltonians there exists a certification algorithm that achieves optimal $\Theta(1/\varepsilon)$ total evolution time \emph{without} using inverse evolution or any controlled operations.  
This provides a partial answer to the question posed in~\cite{gao2025quantum} of whether certification becomes intrinsically harder when inverse or controlled evolution of the target is unavailable, by showing that in the constant-locality regime the optimal scaling in total evolution time can still be attained.

\paragraph{Improved Bell sampling bounds.}
Sinha and Tong~\cite{sinha2025improved} established sharp short-time bounds on Bell sampling.  
For any traceless Hamiltonian $H$ with operator norm $\|H\|_{\mathrm{op}} \le L$, the probability $\Pr[I]$ of obtaining the Bell outcome $I$ after time evolution $e^{-iHt}$ satisfies
\begin{equation}
    1 - t^2 \|H\|_F^{\,2}  \le \Pr[I] \le 1 - 2ct^2\|H\|_F^{\,2},
\end{equation}
for any $c \in (0,1/2)$, provided that $t \le {t^*(c)}/{(2L)}$, where $t^*(c) \in (0,2\pi)$ is defined implicitly by $\cos(t^*(c)) = 1 - c(t^*(c))^2$. These inequalities capture the quadratic short-time decay of $\Pr[I]$, but the admissible time window  
$t \le t^*(c)/(2L)$ is governed by the operator norm of $H$, and therefore remains in the regime  
\(t = \mathcal{O}(1/\|H\|_{\mathrm{op}})\).  
This is too restrictive for certification tasks requiring evolution times on the order of \(1/\varepsilon\).

A natural question is whether similar two-sided control of $\Pr[I]$ can be extended to the longer-time regime $t = \Theta(1/\varepsilon)$, under the promise $\|H\|_F \ge \varepsilon$. If such bounds were available, then one could directly distinguish the cases  
\(\|H\|_F = 0\) and \(\|H\|_F \ge \varepsilon\) using only $\Theta(1/\varepsilon)$ total evolution time. We show that this long-time extension is indeed achievable for $\mathcal{O}(1)$-local Hamiltonians. Although the Sinha–Tong~\cite{sinha2025improved} bounds are inherently restricted to the short-time region determined by the operator norm, we derive new estimates that remain valid up to times of order \(1/\varepsilon\). These improved Bell sampling bounds establish that constant-locality Hamiltonians admit optimal $\Theta(1/\varepsilon)$-time intolerant certification.

\subsection{Open problems}
This work raises several natural questions about the fundamental limits of Hamiltonian certification and the scope of our techniques. We summarize some of these directions below.

\paragraph{Optimal certification without locality assumptions.}
Our main results exploit the locality of the underlying Hamiltonians, and in the $\mathcal{O}(1)$-local regime we show that optimal $\Theta(1/\varepsilon)$ total evolution time is achievable even without inverse or controlled evolution.  
A fundamental open problem is whether such optimal-time certification remains possible \emph{without} any locality assumptions on $H$ or~$H_0$.  
Previous optimal certification results for general (nonlocal) Hamiltonians, such as those of~\cite{gao2025quantum}, rely on access to inverse or controlled evolution.  
It is currently unknown whether optimal intolerant certification for arbitrary Hamiltonians can be achieved using only forward real-time dynamics.  
Resolving this question would clarify whether locality is merely a technical convenience in our analysis or whether it plays an essential role in enabling evolution-time--optimal protocols.

\paragraph{Optimal tolerant certification.}
In this paper we focus on the intolerant setting, where the algorithm must distinguish $H = H_0$ from the case $\|H - H_0\|_F \ge \varepsilon$, without any promise of an intermediate regime.  
A natural next step is to extend our approach to the tolerant formulation, in which one must distinguish $\|H - H_0\|_F \le \varepsilon_1$ from $\|H - H_0\|_F \ge \varepsilon_2$ for $0 \le \varepsilon_1 < \varepsilon_2$.  
For constant-locality Hamiltonians, it is plausible that the optimal total evolution time should scale as $\Theta((\varepsilon_2 - \varepsilon_1)^{-1})$, but our current analysis is tailored to the one-sided (intolerant) case and does not directly yield such a guarantee.  
Developing tolerant versions of our spectral conditions, and understanding the precise tradeoff between $\varepsilon_1$ and $\varepsilon_2$ in the local setting, remain interesting open problems.

\paragraph{Optimal dependence on the locality parameter.}
Our main result shows that for $k$-local Hamiltonians there exists a certification algorithm with total evolution time $\mathcal{O}(c^k/\varepsilon)$ and that, for $\mathcal{O}(1)$-local Hamiltonians, the $1/\varepsilon$ dependence is optimal.  
A natural question is whether one can obtain matching lower bounds that also capture the dependence on $k$, or alternatively whether the $c^k$ factor in our upper bound can be reduced.  
This requires new lower bound techniques that are sensitive to locality, going beyond simple one-qubit examples and exploiting many-body structure.  
Ideally, one would like a tight characterization of the optimal total evolution time for intolerant certification as a function of both $k$ and~$\varepsilon$, and to understand whether there exist regimes of $k$ where certification is strictly easier or harder than suggested by our current analysis.

\paragraph{Certification from local probes.}
Recent advances in Hamiltonian learning suggest that even severely restricted access, such as single-site or constant-size probes of a system undergoing time evolution, can still yield meaningful reconstruction guarantees~\cite{chen2025quantum}.  
This motivates a natural open problem for certification: whether one can efficiently certify a global $k$-local Hamiltonian using only the dynamics observed through a strictly local probe.  
A central question is whether there exists a quantitative tradeoff between probe size and certification complexity.  
For example, a constant-size probe may or may not suffice to achieve optimal $\Theta(1/\varepsilon)$ evolution time, and a larger probe may be fundamentally required to detect the spectral features that distinguish $H$ from $H_0$.  
Understanding this relationship remains open and would help bridge the gap between theoretically optimal certification protocols and experimentally realistic settings with limited local access.

\paragraph{Optimal sparsity testing.}
Our algorithm relies on random local basis changes, diagonalization in effective bases, and spectral gap arguments for the resulting diagonal Hamiltonians.  
These ingredients resemble those used in recent work on learning and testing $s$-sparse Hamiltonians.  
It is therefore natural to ask whether similar techniques can lead to optimal algorithms for Hamiltonian sparsity testing, for example with total evolution time $\mathcal{O}(s/\varepsilon)$ in intolerant settings (and $\mathcal{O}(s/(\varepsilon_2 - \varepsilon_1))$ in tolerant ones).  
Establishing such results would require adapting our eigenvalue-gap framework to properties defined in the Pauli coefficient space (such as sparsity) rather than purely spectral properties, and may shed further light on the precise relationship between learning and testing in the Hamiltonian setting.

\paragraph{Beyond Hamiltonians.}
Our analysis focuses on closed-system dynamics generated by time-independent Hamiltonians.
Many realistic devices, however, are better modeled by quantum channels or Lindbladian evolutions, such as Pauli channels induced by noise processes.
A natural direction is to investigate whether the techniques developed here, including randomized basis selection, effective diagonalization, and Bell-sampling based spectral tests, can be adapted to certify dynamical maps such as Pauli channels.
This would require extending our spectral conditions to appropriate distances on channels (for example, the diamond norm or time-constrained process distances) and identifying which aspects of the Hamiltonian setting are essential and which arise only from the specific model used here.

\section{Preliminaries}\label{sec:prelim}
\paragraph{Notation.}\label{para:notation}
We represent any traceless $n$-qubit Hamiltonian $H$ using its Pauli decomposition
\begin{equation}
    H \;=\; \sum_x \mu_x P_x,
\end{equation}
where $\mu_x\in\mathds{R}$ and $P_x$ ranges over the non-identity Pauli operators.

\medskip
We use several norms to quantify the size of a Hamiltonian.
The \emph{(unnormalized) Frobenius norm} is $\|H\|_{\mathrm{Frob}}
= \sqrt{\operatorname{Tr}(H^\dagger H)}$. The \emph{normalized Frobenius norm} is defined as
\begin{equation}
    \|H\|_{F}
    = \sqrt{\frac{\operatorname{Tr}(H^\dagger H)}{2^n}}
    = 2^{-n/2}\,\|H\|_{\mathrm{Frob}}
    = \left( \sum_x |\mu_x|^2 \right)^{1/2},
\end{equation}
which coincides with the $\ell_2$-norm of its Pauli coefficients.
The operator norm is $\|H\|_{\mathrm{op}}=\sup_{\|v\|=1}\|Hv\|$; for Hermitian $H$ this equals the spectral norm
$\max_i |\lambda_i|$. Since all Hamiltonians in this work are Hermitian, we use $\|\cdot\|_{\mathrm{op}}$ throughout.

\medskip
We write $[n]=\{1,2,\dots,n\}$. We also write $\mathds{F}_2$ for the finite field of size $2$, and $\mathds{F}_2^n$ for the $n$-dimensional vector space over $\mathds{F}_2$. For functions $f:\{-1,1\}^n\to\mathds{R}$ and $1\le p<\infty$, we define $\|f\|_p = \left(\mathds{E}[|f(x)|^p]\right)^{1/p}$, and $\|f\|_\infty = \max_{x\in\{-1,1\}^n} |f(x)|$.

\subsection{Bell sampling}\label{sec:bell-sample}
We recall the Bell basis and the Bell-sampling primitive used throughout this work. For two qubits, the Bell states are
\begin{align*}
    \ket{\sigma_{00}}=\frac{1}{\sqrt{2}}(\ket{00}+\ket{11}),\qquad
    \ket{\sigma_{01}}=\frac{1}{\sqrt{2}}(\ket{00}-\ket{11}), \\
    \ket{\sigma_{10}}=\frac{1}{\sqrt{2}}(\ket{01}+\ket{10}),\qquad
    \ket{\sigma_{11}}=\frac{1}{\sqrt{2}}(\ket{01}-\ket{10}).    
\end{align*}
The $2^n$-qubit Bell basis is given by all tensor products $\ket{\sigma_{\mathbf{s}}}=\ket{\sigma_{s_1}}\otimes\cdots\otimes\ket{\sigma_{s_n}}$, where $s_i\in\{00,01,10,11\}$. Equivalently, we may index outcomes by $\mathbf{s}\in\{0,1\}^{2n}$ via the standard identification
$\{00,01,10,11\}\cong\{0,1\}^2$ per qubit-pair.

\medskip
Given an $n$-qubit state $\rho$, \emph{Bell sampling}~\cite{montanaro2017learning,hangleiter2024bell}
means measuring $\rho^{\otimes 2}$ in the $n$-fold Bell basis
$\{\ket{\sigma_{\mathbf{s}}}\}_{\mathbf{s}\in\{0,1\}^{2n}}$.
The resulting outcome $\mathbf{s}$ is distributed according to the \emph{characteristic distribution}
$q_\rho(\mathbf{s})$.
For a pure state $\rho=\ket{\psi}\!\bra{\psi}$, this distribution admits the Pauli-moment form
\begin{equation}\label{eq:characteristic-distribution}
q_\rho(\mathbf{s})
=2^{-n}\,\bigl\langle \psi \big| P_{\mathbf{s}} \big| \psi \bigr\rangle^{2},
\end{equation}
where $P_{\mathbf{s}}$ denotes the $n$-qubit Pauli operator associated with~$\mathbf{s}$.
Thus, Bell sampling provides direct access to squared Pauli expectation values of the underlying state.

\medskip
A useful consequence of~\cref{eq:characteristic-distribution} is the following. If a unitary admits a Pauli expansion
\begin{equation}
    U=\sum_{x} U_x\,\sigma_x,    
\end{equation}
then Bell sampling applied to the state $U\ket{0^n}$ returns $\sigma_x$ with probability $|U_x|^2$. Sinha and Tong~\cite[Theorem~1]{sinha2025improved} observed that, when $\rho$ arises from Hamiltonian
evolution, the probability of obtaining the \emph{identity} outcome admits a spectral representation
in terms of eigenvalue gaps.
Let $H$ have eigenvalues $\lambda_1,\ldots,\lambda_N$ (with $N=2^n$), and define
\begin{align}\label{eq:bell-spectral}
I(t) \coloneqq \Pr[\,\text{Bell outcome } I\,]
= \frac{1}{4^n}\sum_{j,k}\cos((\lambda_j-\lambda_k)t).
\end{align}
We will use this expression to relate Bell-sampling statistics to the spectrum of~$H$.

\subsection{Analysis of Boolean functions}

We will need a few standard tools from the analysis of Boolean functions. For a comprehensive reference, see~\cite{o2014analysis}. We endow $\{-1,1\}^n$ with the uniform measure. All expectations and probabilities below are with respect to this measure unless stated otherwise.

\begin{definition}[(Fourier weight at degree $k$)]\label{def:fourier-weight}
Let $f:\{-1,1\}^n\to\mathds{R}$ and $0\le k\le n$. The (Fourier) weight of $f$ at degree $k$ is
\begin{equation}
    W^k[f] = \sum_{\substack{S\subseteq[n], \, |S|=k}} \widehat f(S)^2.
\end{equation}
If $f:\{-1,1\}^n\to\{-1,1\}$ is Boolean-valued, define the spectral distribution $\mathcal{S}_f$ on subsets of $[n]$ by $\Pr_{S\sim\mathcal{S}_f}[S=T] = \widehat f(T)^2$ for all $T\subseteq[n]$. Then $W^k[f] = \Pr_{S\sim\mathcal{S}_f}[|S|=k]$.
\end{definition}

\begin{theorem}[(Noise operator and hypercontractivity (Bonami--Beckner))]\label{thm:hypercontractivity}
Let $f:\{-1,1\}^n\to\mathds{R}$ and let $1\le p\le q\le\infty$.
For $\rho\in[-1,1]$, define the noise operator $T_\rho$ by
\begin{equation}
    (T_\rho f)(x) = \mathds{E}_{y\sim N_\rho(x)}[f(y)],    
\end{equation}
where $N_\rho(x)$ is the distribution on $\{-1,1\}^n$ obtained by independently setting, for each $i\in[n]$,
\[
y_i =
\begin{cases}
x_i, & \text{with probability } \frac{1+\rho}{2},\\
-x_i, & \text{with probability } \frac{1-\rho}{2}.
\end{cases}
\]
If $0\le \rho \le \sqrt{\frac{p-1}{q-1}}$, then $\|T_\rho f\|_q \le \|f\|_p$.
\end{theorem}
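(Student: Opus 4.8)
The plan is to prove this by the classical three–step route: reduce to the extremal noise rate, tensorize over the $n$ coordinates to reduce to a single bit, and then establish the one–variable ``two–point inequality'' by elementary calculus. \emph{Step 1 (reduction to the boundary value of $\rho$).} The noise operators compose multiplicatively, $T_{\rho_1}T_{\rho_2}=T_{\rho_1\rho_2}$ (on the Fourier side, $\widehat{T_\rho f}(S)=\rho^{|S|}\widehat f(S)$), and each $T_\sigma$ with $\sigma\in[0,1]$ is an averaging operator, hence a contraction on every $L^p$ by Jensen's inequality since the $y$–marginal of $N_\sigma(x)$ is uniform. Therefore it suffices to prove $\|T_{\rho^\star}f\|_q\le\|f\|_p$ at the endpoint $\rho^\star=\sqrt{(p-1)/(q-1)}$: for $0\le\rho<\rho^\star$ one writes $T_\rho=T_{\rho^\star}\,T_{\rho/\rho^\star}$ and first bounds $\|T_{\rho/\rho^\star}f\|_p\le\|f\|_p$. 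The degenerate cases $p=1$ or $q=\infty$ force $\rho^\star=0$, where $T_0 f=\mathds{E}[f]$ and the claim is immediate from $|\mathds{E}[f]|\le\|f\|_p$.

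\emph{Step 2 (tensorization and induction on $n$).} Write $T_\rho^{(n)}$ for the $n$–bit noise operator and observe $T_\rho^{(n)}=T_\rho^{(1)}\otimes T_\rho^{(n-1)}$, i.e.\ it applies the one–bit operator in the last coordinate and the $(n-1)$–bit operator in the remaining ones. I will use the general tensorization lemma: if $A$ and $B$ are operators of norm at most $1$ from $L^p$ to $L^q$ on two probability spaces, with $1\le p\le q$, then $A\otimes B$ also has norm at most $1$ from $L^p$ to $L^q$ on the product space. This is proved by applying $A$ in the first coordinate, then $B$ in the second, estimating in between with Minkowski's integral inequality for the exponent $q/p\ge1$ (this is the only place $p\le q$ is used, besides the base case). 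Granting the one–bit case, induction on $n$ then yields the full statement.

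\emph{Step 3 (the two–point inequality).} It remains to prove the $n=1$ case: for $f(x)=a+bx$ on $\{-1,1\}$ with the uniform measure one has $T_{\rho^\star}f(x)=a+\rho^\star bx$, and the goal is $\|a+\rho^\star bx\|_q\le\|a+bx\|_p$. By homogeneity and the symmetries $x\mapsto -x$, $f\mapsto -f$ we may assume $a\ge b\ge0$; the case $a=0$ is trivial since $\rho^\star\le1$, so set $a=1$, $b=t\in[0,1]$, whence $1\pm t\ge0$ and $1\pm\rho^\star t\ge0$. Raising both sides to the power $q$, the inequality becomes
\[
\tfrac12(1+\rho^\star t)^q+\tfrac12(1-\rho^\star t)^q \;\le\; \Bigl(\tfrac12(1+t)^p+\tfrac12(1-t)^p\Bigr)^{q/p}.
\]
Expanding both sides in powers of $t$ via $\tfrac12(1+s)^r+\tfrac12(1-s)^r=\sum_{j\ge0}\binom{r}{2j}s^{2j}$, the constant terms agree, the $t^2$ coefficients agree \emph{precisely} because $(\rho^\star)^2(q-1)=p-1$, and one then checks that all higher–order coefficients obey the inequality, using $\rho^\star\le1$ together with convexity of $s\mapsto s^{q/p}$. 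The remaining range $t>1$ (equivalently $|b|>|a|$) is handled by a short additional case analysis; alternatively, the whole $n=1$ step follows by first proving the $(p,2)$ and $(2,q)$ two–point inequalities and composing them via $T_{\rho_1}T_{\rho_2}=T_{\rho_1\rho_2}$, which reproduces the constant $\rho^\star$.

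I expect Step 3 to be the main obstacle. Steps 1 and 2 are soft functional–analytic reductions, whereas the one–dimensional two–point inequality is the genuinely analytic core: handling it uniformly over all $1\le p\le q\le\infty$ and all $t\ge0$ — in particular controlling the tail of the power–series comparison and the regime $t>1$ — requires the careful elementary estimates that encode the sharp constant $\rho^\star=\sqrt{(p-1)/(q-1)}$.
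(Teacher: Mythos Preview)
The paper does not prove this theorem: it is stated in the Preliminaries (Section~2.2) as a standard result from the analysis of Boolean functions, with a reference to~\cite{o2014analysis}, and is then used as a black box to derive \cref{thm:low-degree-Lq}. There is therefore no ``paper's own proof'' to compare against.

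That said, your outline is the classical route (reduction to the extremal $\rho$, tensorization via Minkowski's integral inequality, and the one-bit two-point inequality), and Steps~1--2 are correctly stated. Your Step~3 is a sketch rather than a proof: the symmetry reduction you invoke gives $a,b\ge 0$ but not $a\ge b$, so the regime $t>1$ genuinely requires separate work (as you note), and the ``one then checks that all higher-order coefficients obey the inequality'' hides the real computation. This is fine as a roadmap, and it is essentially the argument in Chapters~9--10 of O'Donnell's book, but if you intend to supply a self-contained proof you should either carry out the coefficient comparison in full (including $t>1$), or take the alternative you mention of composing the sharper $(p,2)$ and $(2,q)$ two-point inequalities, each of which is an explicit one-variable calculation.
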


As a useful consequence of the $(2,q)$- and $(p,2)$-hypercontractivity inequalities, we will use the following moment bound for low-degree functions.

\begin{theorem}[(Generalization of the Bonami lemma)]\label{thm:low-degree-Lq}
Let $f:\{-1,1\}^n\to\mathds{R}$ have Fourier degree at most $k$.
Then for any $q\ge 2$, $\|f\|_q \le (q-1)^{k/2} \|f\|_2$.
\end{theorem}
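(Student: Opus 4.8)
The plan is to derive the bound directly from the $(2,q)$-hypercontractivity inequality contained in \cref{thm:hypercontractivity} (the case $p=2$), by ``inverting'' the noise operator on the finite-dimensional space of low-degree polynomials. Fix $q\ge 2$ and set $\rho = 1/\sqrt{q-1}\in(0,1]$, which is precisely the largest value for which \cref{thm:hypercontractivity} yields $\|T_\rho g\|_q \le \|g\|_2$ for every $g:\{-1,1\}^n\to\R$. Given $f$ of Fourier degree at most $k$, I would introduce
\[
g \;:=\; \sum_{S\subseteq[n],\,|S|\le k}\rho^{-|S|}\,\widehat f(S)\,\chi_S,
\]
which is a well-defined function because the sum is finite: it is the image of $f$ under the linear map that scales the level-$j$ Fourier part by $\rho^{-j}$, and this makes sense as an operator on degree-$\le k$ polynomials even though $\rho^{-1}>1$. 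Since $T_\rho$ scales the level-$j$ Fourier part by $\rho^{j}$, we have $T_\rho g = \sum_{|S|\le k}\rho^{|S|}\rho^{-|S|}\widehat f(S)\,\chi_S = f$.

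Applying hypercontractivity to $g$ then gives $\|f\|_q = \|T_\rho g\|_q \le \|g\|_2$. By Parseval and the degree bound (using $\rho^{-2|S|}\le\rho^{-2k}$ for $|S|\le k$, since $\rho\le 1$),
\[
\|g\|_2^2 \;=\; \sum_{|S|\le k}\rho^{-2|S|}\,\widehat f(S)^2 \;\le\; \rho^{-2k}\sum_{|S|\le k}\widehat f(S)^2 \;=\; \rho^{-2k}\,\|f\|_2^2 \;=\; (q-1)^k\,\|f\|_2^2 .
\]
Combining the two displays yields $\|f\|_q \le (q-1)^{k/2}\|f\|_2$, as claimed. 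For $q=\infty$ one takes the limit $q\to\infty$ (or argues directly), though the estimate is typically invoked only for finite $q$.

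The one point requiring care — and the step I would scrutinize most — is the legitimacy of this ``inverse noise operator'': $T_{1/\rho}$ is neither contractive nor a genuine $T_\rho$ with $\rho\le 1$, so one must restrict to the subspace of degree-$\le k$ polynomials and \emph{define} $g$ purely through its Fourier expansion there. Once that is granted, the identity $T_\rho g = f$ and the $L^2$ computation are routine Parseval bookkeeping, and no induction on $n$ is needed since \cref{thm:hypercontractivity} already supplies the analytic input.
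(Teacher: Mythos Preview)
The paper does not supply a proof of this theorem: it is stated as a preliminary and introduced only with the remark that it is ``a useful consequence of the $(2,q)$- and $(p,2)$-hypercontractivity inequalities''. Your argument is the standard derivation of that consequence from the $(2,q)$ case of \cref{thm:hypercontractivity}, and it is correct as written; the ``inverse noise operator'' step you flag is unproblematic precisely because you restrict to the finite-dimensional subspace of degree-$\le k$ polynomials and define $g$ via its Fourier expansion.
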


\subsection{Useful lemmas}

We collect several probabilistic and linear-algebraic inequalities that will be used throughout the analysis.

\begin{lemma}[(Markov's inequality)]\label{lem:markov}
Let $Z\ge 0$ be a random variable and let $t>0$. Then
\begin{equation}
    \Pr[Z \ge t] \le \frac{\mathds{E}[Z]}{t}.
\end{equation}
\end{lemma}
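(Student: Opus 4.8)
The plan is to prove \cref{lem:markov} by the standard indicator-variable argument, which is the cleanest route and also the one that makes transparent exactly where the hypothesis $Z\ge 0$ enters. First I would introduce the indicator random variable $\mathds{1}\{Z\ge t\}$, equal to $1$ on the event $\{Z\ge t\}$ and $0$ elsewhere, and record the tautology $\mathds{E}[\mathds{1}\{Z\ge t\}]=\Pr[Z\ge t]$.

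The key step is the pointwise inequality $t\cdot\mathds{1}\{Z\ge t\}\le Z$, which holds for every sample point: on the event $\{Z\ge t\}$ the left-hand side equals $t$, which is at most $Z$ by definition of the event; off the event the left-hand side is $0$, which is at most $Z$ because $Z\ge 0$. Taking expectations of both sides and using monotonicity and linearity of expectation then gives $t\cdot\Pr[Z\ge t]=\mathds{E}\big[t\cdot\mathds{1}\{Z\ge t\}\big]\le\mathds{E}[Z]$, and dividing through by $t>0$ yields the stated bound.

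I do not expect any genuine obstacle here: the result is elementary and the only substantive point is the pointwise bound, whose validity off the event $\{Z\ge t\}$ relies precisely on nonnegativity of $Z$ (for a signed $Z$ the inequality can fail). Integrability requires no separate discussion — if $\mathds{E}[Z]=+\infty$ the claim is trivial, and otherwise $t\,\mathds{1}\{Z\ge t\}$ is dominated by the integrable variable $Z$, so all expectations above are finite and the manipulations are justified. This lemma will serve as a basic building block for the second-moment and Paley–Zygmund arguments used later in the paper.
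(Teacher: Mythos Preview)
Your proof is correct and is the standard indicator-variable argument for Markov's inequality. The paper itself states this lemma without proof (it is listed among standard ``useful lemmas'' in \cref{sec:prelim}), so there is nothing further to compare.
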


\begin{lemma}[(Paley--Zygmund inequality)]\label{lem:p-z}
Let $Z\ge 0$ be a random variable with $\mathds{E}[Z^2]<\infty$ and let $0\le \theta\le 1$. Then
\begin{equation}
    \Pr\left[ Z > \theta\,\mathds{E}[Z] \right]
    \ge
    (1-\theta)^2 \cdot \frac{\mathds{E}[Z]^2}{\mathds{E}[Z^2]}.
\end{equation}
\end{lemma}

We use~\cref{lem:p-z} to lower-bound the probability that a suitable nonnegative random variable is not too small in terms of its first two moments. In particular, we apply it to squared eigenvalue differences, taking $Z$ to be a quadratic function of the spectrum and choosing $\theta$ as a fixed constant.

\begin{lemma}[(Hoffman--Wielandt inequality~\cite{hoffman1953the})]\label{lem:hoffman-wielandt}
Let $A,B\in\mathds{C}^{d\times d}$ be normal matrices, and let
$\{\lambda_i(A)\}_{i=1}^d$ and $\{\lambda_i(B)\}_{i=1}^d$ denote their eigenvalues.
Then there exists a permutation $\pi$ of $[d]$ such that
\begin{equation}
    \frac{1}{d}\sum_{i=1}^d \big|\lambda_i(A)-\lambda_{\pi(i)}(B)\big|^2
    \le \|A-B\|_{F}^2,
\end{equation}
If $A$ and $B$ are Hermitian, the same bound holds.
\end{lemma}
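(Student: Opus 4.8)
The plan is to reduce the inequality to a standard linear-optimization fact over the set of doubly stochastic matrices, exploiting that normal matrices are unitarily diagonalizable and that the Frobenius norm is unitarily invariant. First I would diagonalize: write $A = U\Lambda_A U^\dagger$ and $B = V\Lambda_B V^\dagger$ with $U,V$ unitary and $\Lambda_A = \diag(\lambda_1(A),\dots,\lambda_d(A))$, $\Lambda_B = \diag(\lambda_1(B),\dots,\lambda_d(B))$, which is possible exactly because $A,B$ are normal. Setting $W \coloneqq U^\dagger V$ (again unitary), unitary invariance of $\|\cdot\|_{\mathrm{Frob}}$ gives $\|A-B\|_{\mathrm{Frob}}^2 = \|\Lambda_A - W\Lambda_B W^\dagger\|_{\mathrm{Frob}}^2$.

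Next I would expand the right-hand side using $\|X\|_{\mathrm{Frob}}^2 = \Tr(X^\dagger X)$ and cyclicity of the trace, obtaining $\|\Lambda_A - W\Lambda_B W^\dagger\|_{\mathrm{Frob}}^2 = \sum_i |\lambda_i(A)|^2 + \sum_j |\lambda_j(B)|^2 - 2\operatorname{Re}\Tr(\Lambda_A^\dagger W\Lambda_B W^\dagger)$. Since $\Lambda_A,\Lambda_B$ are diagonal, the cross term unpacks as $\operatorname{Re}\Tr(\Lambda_A^\dagger W\Lambda_B W^\dagger) = \sum_{i,j} |W_{ij}|^2\,\operatorname{Re}\!\bigl(\overline{\lambda_i(A)}\,\lambda_j(B)\bigr) = \sum_{i,j} D_{ij}\,\operatorname{Re}\!\bigl(\overline{\lambda_i(A)}\,\lambda_j(B)\bigr)$, where $D_{ij} \coloneqq |W_{ij}|^2$. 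Because $W$ is unitary, every row and column of $D$ sums to $1$, i.e. $D$ lies in the Birkhoff polytope of $d\times d$ doubly stochastic matrices.

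The heart of the argument is then the observation that $D \mapsto \sum_{i,j} D_{ij}\operatorname{Re}(\overline{\lambda_i(A)}\lambda_j(B))$ is a linear functional on the Birkhoff polytope, whose extreme points, by the Birkhoff--von Neumann theorem, are precisely the permutation matrices. A linear functional on a polytope attains its maximum at a vertex, so $\sum_{i,j} D_{ij}\operatorname{Re}(\overline{\lambda_i(A)}\lambda_j(B)) \le \max_{\sigma}\sum_i \operatorname{Re}(\overline{\lambda_i(A)}\lambda_{\sigma(i)}(B))$. Taking $\pi$ to be a maximizing permutation and substituting back, and using $\sum_j|\lambda_j(B)|^2 = \sum_i|\lambda_{\pi(i)}(B)|^2$, one gets $\|A-B\|_{\mathrm{Frob}}^2 \ge \sum_i |\lambda_i(A)|^2 + \sum_i |\lambda_{\pi(i)}(B)|^2 - 2\sum_i \operatorname{Re}(\overline{\lambda_i(A)}\lambda_{\pi(i)}(B)) = \sum_i |\lambda_i(A) - \lambda_{\pi(i)}(B)|^2$. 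Dividing both sides by $d = 2^n$ and recalling $\|A-B\|_F^2 = d^{-1}\|A-B\|_{\mathrm{Frob}}^2$ yields the normalized form in the statement; the Hermitian case is the specialization in which all eigenvalues are real, so the $\operatorname{Re}(\cdot)$'s are vacuous and the complex bookkeeping disappears.

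The only nontrivial input, and hence the main obstacle, is the Birkhoff--von Neumann theorem together with the elementary extreme-point principle for linear functionals on polytopes; the rest is routine manipulation of traces. One could instead bypass Birkhoff--von Neumann and argue directly — for instance, showing by a rearrangement-type exchange argument that $\sum_{i,j} D_{ij} a_i b_j$ over doubly stochastic $D$ is maximized when $a$ and $b$ are equally sorted — but invoking Birkhoff is cleaner and handles the complex (normal) case uniformly, so I would take that route.
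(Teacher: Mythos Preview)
Your argument is correct and is essentially the standard proof of the Hoffman--Wielandt inequality via Birkhoff--von~Neumann. Note, however, that the paper does not actually prove this lemma: it is stated as a classical result with a citation to~\cite{hoffman1953the}, alongside Markov and Paley--Zygmund, and is used as a black box in the stability analysis (\cref{pro:eig-stability}). So there is no ``paper's own proof'' to compare against; your write-up simply supplies the omitted justification. One cosmetic remark: in the lemma $d$ is a generic dimension, not necessarily $2^n$, so the aside ``$d=2^n$'' is unnecessary (the division-by-$d$ step is correct regardless).
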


\section{Sufficient conditions for Hamiltonian certification}\label{sec:sufficient-conditions}

In this section, we analyze the Bell-sampling identity-outcome probability associated with the
evolution operator $e^{-iHt}$ and identify structural conditions under which optimal
$\Theta(1/\varepsilon)$-time Hamiltonian certification can be achieved. Our goal is to clarify how the
spectral structure of a Hamiltonian governs the detectability of deviations from a reference
Hamiltonian.

\medskip
Let $I(t)$ denote the probability of obtaining the identity outcome in Bell sampling applied to the
evolved state (as defined via the spectral representation in~\cref{sec:bell-sample}).
By the spectral representation in~\cref{sec:bell-sample}, this identity-outcome probability depends
only on eigenvalue \emph{differences} (see~\cref{eq:bell-spectral}). This motivates a quantitative
measure of how much of the spectrum contributes nontrivially to the Bell-sampling signal.

\begin{definition}[(Proportion of $\varepsilon$-separated eigenvalue pairs)]
Let $X\in\mathds{C}^{N\times N}$ be Hermitian with eigenvalues $\lambda_1,\ldots,\lambda_N$.
For $\varepsilon>0$, define
\begin{equation}
    \Lambda(X,\varepsilon) \coloneqq \frac{1}{N^2}\sum_{j,k:\,|\lambda_j-\lambda_k|\ge \varepsilon} 1.
\end{equation}
\end{definition}

A large value of $\Lambda(X,\varepsilon)$ indicates that many eigenvalue gaps contribute oscillatory
terms to the probability of obtaining the identity outcome in Bell sampling. In particular, if
$\Lambda(X,\varepsilon)$ is bounded below by a positive constant, then this identity-outcome
probability cannot remain close to $1$ throughout $t\in[0,2/\varepsilon]$, since destructive
interference among the cosine terms forces a noticeable decrease.

\medskip
To obtain an optimal-time certification guarantee for general Hamiltonians, it is therefore
sufficient to show that whenever $\|H\|_F$ is at least $\varepsilon$, the proportion of eigenvalue pairs
separated by at least $\varepsilon$ is bounded below by a positive constant. Equivalently, it suffices
to establish the existence of a constant $c>0$ such that $\Lambda(H,\|H\|_F)\ge c$ for all
Hamiltonians under consideration.

\subsection{A sufficient spectral condition}
We now formalize how a lower bound on $\Lambda(H,\varepsilon)$ yields an optimal-time certification
procedure.

\begin{lemma}[(Sufficient spectral condition)]\label{lem:spectral-condition}
Let $H$ be an $n$-qubit Hamiltonian and fix $\varepsilon>0$.
Suppose there exists a constant $d>0$ such that $\Lambda(H,\varepsilon) \ge d$.
Then there is a randomized procedure that, using total evolution time
$\mathcal{O}(\log(1/\delta)/\varepsilon)$ and $\mathcal{O}(\log(1/\delta))$ samples of $I(t)$ for
uniformly random $t \in [0,2/\varepsilon]$, finds a time $t$ satisfying
\begin{equation}
    I(t) \le 1 - \frac{d}{4}
\end{equation}
with probability at least $1-\delta$.
In particular, suppose that for the case $\|H\|_F \ge \varepsilon$, $\Lambda(H,\varepsilon)$ is always bounded below by a constant. Then, for constant $\delta$, one can distinguish between $\|H\|_F = 0$ and $\|H\|_F \ge \varepsilon$ using total evolution time
$\mathcal{O}(1/\varepsilon)$.
\end{lemma}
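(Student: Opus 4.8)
The plan is to show that the \emph{time-averaged} identity-outcome probability $\overline{I}:=\mathbb{E}_{t\sim\mathrm{Unif}[0,2/\varepsilon]}[I(t)]$ lies below $1$ by an amount proportional to $d$, and then to turn this into a witness-finding guarantee by a one-line reverse-Markov step followed by standard amplification. Starting from the spectral representation \cref{eq:bell-spectral}, $I(t)=\frac{1}{N^2}\sum_{j,k}\cos((\lambda_j-\lambda_k)t)$ with $N=2^n$, linearity of expectation gives $\overline{I}=\frac{1}{N^2}\sum_{j,k}a_{jk}$ with $a_{jk}:=\mathbb{E}_t[\cos(\omega_{jk}t)]$ and $\omega_{jk}:=|\lambda_j-\lambda_k|$. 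A direct evaluation gives, for $\omega_{jk}>0$,
\[
 a_{jk}=\frac{\varepsilon}{2}\int_0^{2/\varepsilon}\cos(\omega_{jk}t)\,dt=\frac{\varepsilon\,\sin(2\omega_{jk}/\varepsilon)}{2\,\omega_{jk}},
\]
so $|a_{jk}|\le \varepsilon/(2\omega_{jk})$; in particular $a_{jk}\le 1$ always, while $|a_{jk}|\le\tfrac12$ whenever $\omega_{jk}\ge\varepsilon$, and the $N$ diagonal pairs (and any exactly degenerate pair) just have $a_{jk}=1$.

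Next I would split the double sum according to whether $\omega_{jk}<\varepsilon$ or $\omega_{jk}\ge\varepsilon$. There are $N^2\bigl(1-\Lambda(H,\varepsilon)\bigr)$ pairs of the first type, each contributing at most $1$, and $N^2\Lambda(H,\varepsilon)\ge N^2 d$ pairs of the second type, each contributing at most $\tfrac12$. Hence
\[
 \overline{I}\;\le\;\bigl(1-\Lambda(H,\varepsilon)\bigr)\cdot 1+\Lambda(H,\varepsilon)\cdot\tfrac12\;=\;1-\tfrac12\Lambda(H,\varepsilon)\;\le\;1-\tfrac{d}{2}.
\]
Since $I(t)$ is a probability, the random variable $Y:=1-I(t)$ satisfies $Y\in[0,1]$ and $\mathbb{E}[Y]\ge d/2$, so the elementary bound $\mathbb{E}[Y]\le\Pr[Y\ge d/4]\cdot 1+d/4$ rearranges to
\[
 \Pr_{t\sim\mathrm{Unif}[0,2/\varepsilon]}\bigl[\,I(t)\le 1-\tfrac{d}{4}\,\bigr]\;\ge\;\tfrac{d}{4}.
\]
(One may instead invoke \cref{lem:p-z} with $\theta=\tfrac12$, since $\mathbb{E}[Y^2]\le\mathbb{E}[Y]$; this gives a comparable constant.)

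To finish, I would draw $m=\lceil(4/d)\ln(1/\delta)\rceil=\mathcal{O}(\log(1/\delta))$ i.i.d.\ times $t_1,\dots,t_m\sim\mathrm{Unif}[0,2/\varepsilon]$, evaluate $I$ at each, and return the time achieving the smallest observed value; the probability that no $t_i$ satisfies $I(t_i)\le 1-d/4$ is at most $(1-d/4)^m\le e^{-md/4}\le\delta$. Each evaluation uses evolution time at most $2/\varepsilon$, so the total evolution time is at most $2m/\varepsilon=\mathcal{O}(\log(1/\delta)/\varepsilon)$ and the number of $I(t)$-samples is $m=\mathcal{O}(\log(1/\delta))$, as claimed. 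For the ``in particular'' statement: if $\|H\|_F=0$ then $H=0$, all $\lambda_j$ coincide, $I(t)\equiv 1$, and no time is ever a witness; if $\|H\|_F\ge\varepsilon$ then by hypothesis $\Lambda(H,\varepsilon)\ge c$ for a universal constant $c$, and running the procedure with $d=c$ and $\delta$ a fixed constant produces a witness with constant probability. Declaring $\|H\|_F=0$ iff no witness time is found thus distinguishes the two cases with total evolution time $\mathcal{O}(1/\varepsilon)$.

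There is no genuinely hard step here; the points to get right are (i) bounding each $\varepsilon$-separated cosine average crudely by $\tfrac12$ rather than chasing extra cancellation, and folding all near-degenerate pairs (including the $N$ diagonal ones) into the trivial bound $1$; and (ii) the convention that ``a sample of $I(t)$'' means one evaluation of the function $I$ at the drawn time. If $I(t)$ is instead estimated by Bell sampling, it suffices to estimate it to constant additive accuracy (since $d$ is a fixed constant), which costs $\mathcal{O}(\log(1/\delta))$ additional shots per time, each at evolution time $\le 2/\varepsilon$, and merely shifts the detection threshold by a constant multiple of $d$; in the constant-$\delta$ regime of the ``in particular'' claim this is $\mathcal{O}(1)$ shots, leaving the $\mathcal{O}(1/\varepsilon)$ bound intact.
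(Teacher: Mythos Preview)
Your proof is correct and follows the same overall structure as the paper's: compute the time-average $\overline{I}$ via the sinc identity, bound the contribution of $\varepsilon$-separated pairs by $\tfrac12$, and then turn the average bound into a witness guarantee by a reverse-Markov step plus amplification.

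The one substantive difference is \emph{where} the reverse-Markov step is applied. You apply it to $Y=1-I(t)\in[0,1]$, obtaining $\Pr_t[I(t)\le 1-d/4]\ge d/4$ and hence $m=\mathcal{O}((1/d)\log(1/\delta))$ trials. The paper instead isolates the conditional average $I_{\ge\varepsilon}(t):=\mathds{E}_{|\Delta_{jk}|\ge\varepsilon}[\cos(\Delta_{jk}t)]$, notes that $\mathds{E}_t[I_{\ge\varepsilon}(t)]\le 1/2$, and applies the Markov-type argument to this restricted quantity, yielding $\Pr_t[I_{\ge\varepsilon}(t)\le 3/4]\ge 1/3$---a universal constant independent of $d$. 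Whenever that event holds, $I(t)\le(1-\Lambda)+\tfrac34\Lambda=1-\Lambda/4\le 1-d/4$. So the paper needs only $m=\Theta(\log(1/\delta))$ trials with an absolute constant. Since the lemma phrases $d$ as ``a constant,'' your $1/d$ factor is formally absorbed into the big-$\mathcal{O}$; but the paper's $d$-independent success probability is what is actually invoked later in the proof of \cref{thm:main-algorithm}, where $d=\Theta(9^{-k})$ is not an absolute constant and the per-round detection probability must stay $\Omega(1)$.
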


\begin{proof}
Let $t$ be uniformly random in $[0,2/\varepsilon]$, and write $\Delta_{jk} := \lambda_j - \lambda_k$.  
From the spectral representation of Bell sampling,
\begin{align}
    I(t)
    & = \frac{1}{4^n} \sum_{j,k} \cos(\Delta_{jk} t) \nonumber \\
    & = \mathds{E}_{\Delta_{jk}}\left[ \cos(\Delta_{jk} t) \right].
\end{align}
Define $I_{\ge \varepsilon}(t)
    := \mathds{E}_{|\Delta_{jk}|\ge \varepsilon}\left[ \cos(\Delta_{jk} t) \right]$, and let $\Lambda := \Lambda(H,\varepsilon)$ denote the proportion of eigenvalue pairs with $|\Delta_{jk}| \ge \varepsilon$.  
Then
\begin{equation}
    I(t) \le (1-\Lambda) + \Lambda\, I_{\ge \varepsilon}(t).
\end{equation}

\medskip
For any pair $(j,k)$ with $|\Delta_{jk}| \ge \varepsilon$, the average of $\cos(\Delta_{jk} t)$ over $t \sim \mathrm{Unif}[0,2/\varepsilon]$ is
\begin{equation}
    \mathds{E}_t\left[\cos(\Delta_{jk} t)\right]
        = \frac{\varepsilon}{2}\cdot\frac{\sin(2\Delta_{jk}/\varepsilon)}{\Delta_{jk}},
\end{equation}
which satisfies
\begin{equation}
    \left|\mathds{E}_t[\cos(\Delta_{jk} t)]\right|
        \le \frac{\varepsilon}{2|\Delta_{jk}|}
        \le \frac{1}{2},
\end{equation}
since $|\Delta_{jk}| \ge \varepsilon$.  
For pairs with $|\Delta_{jk}| < \varepsilon$, we simply use the trivial bound $\cos(\Delta_{jk} t) \le 1$.  
Averaging termwise yields $\mathds{E}_t[I_{\ge \varepsilon}(t)] \le 1/2$.

\medskip
Now consider the random variable $X := I_{\ge \varepsilon}(t)$ with $t \sim \mathrm{Unif}[0,2/\varepsilon]$.  
We have $0 \le X \le 1$ for all $t$, and $\mathds{E}_t[I_{\ge \varepsilon}(t)] \le 1/2$ implies $\mathds{E}[X] \le 1/2$.  
Let $A := \{t \in [0,2/\varepsilon] : X(t) > 3/4\}$, $B := [0,2/\varepsilon] \setminus A$, and let $\mu$ be the uniform measure on the interval. Write $\alpha := \mu(A)$ and $\beta := \mu(B) = 1-\alpha$.

\medskip
Since $X(t) \ge 3/4$ for $t \in A$ and $X(t) \ge 0$ everywhere, we have
\begin{align}
    \mathds{E}[X]
    & = \int_{0}^{2/\varepsilon} X(t)\, d\mu(t) \nonumber \\
    & \ge \int_{A} X(t)\, d\mu(t)
    \ge \frac{3}{4}\alpha.
\end{align}
Combining this with $\mathds{E}[X] \le 1/2$ gives $\alpha \le {2}/{3}$, and hence $\beta = 1-\alpha \ge {1}/{3}$.

\medskip
Therefore, with probability at least $1/3$ over a uniformly random $t \in [0,2/\varepsilon]$,
\begin{align}
    I(t)
    &\le (1-\Lambda) + \Lambda\, I_{\ge \varepsilon}(t) \nonumber \\
    & \le (1-\Lambda) + \Lambda \cdot \frac{3}{4} \nonumber \\
    & = 1 - \frac{\Lambda}{4}.
\end{align}
Using the assumption $\Lambda \ge d$, we conclude that
\begin{equation}
    I(t) \le 1 - \frac{d}{4}
\end{equation}
with probability at least $1/3$.

\medskip
To find such a time $t$ with failure probability at most $\delta$, it suffices to draw  
$m = \Theta(\log(1/\delta))$ independent samples $t_1,\dots,t_m$ uniformly from $[0,2/\varepsilon]$.  
Since $\beta \ge 1/3$, the probability that none of these samples lies in $B$ is at most  
$(1 - 1/3)^m \le \delta$ for an appropriate constant.  
Each sample uses evolution time at most $2/\varepsilon$, so the total evolution time is  
$\mathcal{O}(\log(1/\delta)/\varepsilon)$, which becomes $\mathcal{O}(1/\varepsilon)$ for constant~$\delta$. Finding such $t$ allows us to discriminate the two cases $\|H\|_F=0$ and $\|H\|_F\ge\varepsilon$, because $I(t)=1$ always holds for $\|H\|_F=0$.
\end{proof}

This spectral condition holds for certain structured families of Hamiltonians, such as those
diagonal in the Pauli-$Z$ basis, whose spectra are sufficiently spread out to contain many
well-separated eigenvalue gaps. Establishing an analogous property for general constant-local
Hamiltonians is the key technical step required for optimal certification guarantees. In the
sections that follow, we verify this condition for Pauli-$Z$-diagonal Hamiltonians.

\subsection{Structure of $Z$-diagonal Hamiltonians}
We begin by showing that $Z$-diagonal $k$-local Hamiltonians satisfy the eigenvalue-gap condition
in~\cref{lem:spectral-condition}.

\begin{proposition}[(Eigenvalue-gap bound for $Z$-diagonal $k$-local Hamiltonians)]\label{pro:Z-basis}
Let $H$ be an $n$-qubit, $k$-local, traceless Hamiltonian of the form
\begin{equation}
    H = \sum_{P\in\{I,Z\}^{\otimes n},\, |P|\le k} \alpha_P P.
\end{equation}
Then the proportion of eigenvalue pairs seprated by at least $\|H\|_F$ satisfies
\begin{equation}
    \Lambda(H,\|H\|_F) \ge \frac14 \cdot 9^{-k}.
\end{equation}
\end{proposition}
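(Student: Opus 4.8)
The plan is to translate the proposition into an anti-concentration estimate for a low-degree Boolean function, and then to obtain that estimate from hypercontractivity together with the Paley--Zygmund inequality. Since $H$ is diagonal in the computational ($Z$) basis, its eigenvalues are exactly the numbers $f(b) = \sum_{S:\,|S|\le k} \alpha_S \,(-1)^{\sum_{i\in S} b_i}$ for $b\in\{0,1\}^n$, where $\alpha_S$ is the coefficient of $Z^S$. Identifying $x_i = (-1)^{b_i}$, the map $f:\{-1,1\}^n\to\mathds{R}$ becomes the multilinear polynomial $f(x)=\sum_{S:\,|S|\le k}\alpha_S\,\chi_S(x)$, so $f$ has Fourier degree at most $k$, with $\widehat f(\emptyset)=0$ by tracelessness and $\mathds{E}[f^2]=\sum_S\alpha_S^2=\|H\|_F^2$ by Parseval and the identification of the normalized Frobenius norm with the $\ell_2$-norm of the Pauli coefficients. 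Crucially, drawing a uniformly random index pair $(j,k)\in[N]^2$ of eigenvalues (with multiplicity, $N=2^n$) is the same as drawing independent uniform $x,y\in\{-1,1\}^n$ and forming $(f(x),f(y))$, so
\[
\Lambda(H,\|H\|_F)\;=\;\Pr_{x,y}\bigl[\,|f(x)-f(y)|\ge\|H\|_F\,\bigr].
\]

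First I would introduce the gap function $F(x,y):=f(x)-f(y)$ on the doubled cube $\{-1,1\}^{2n}$. Because $f(x)$ is a polynomial of degree at most $k$ in the first block of variables and $-f(y)$ a polynomial of degree at most $k$ in the disjoint second block, $F$ itself has Fourier degree at most $k$ (this is the small point worth stating carefully: the degree is $k$, not $2k$). Using independence of $x$ and $y$ and $\mathds{E}[f]=0$, I would compute $\mathds{E}[F^2]=2\,\mathds{E}[f^2]=2\|H\|_F^2$. Then I would apply \cref{thm:low-degree-Lq} with $q=4$ to $F$, obtaining $\|F\|_4\le 3^{k/2}\|F\|_2$, i.e.
\[
\mathds{E}[F^4]\;\le\;9^{k}\,\bigl(\mathds{E}[F^2]\bigr)^2 .
\]

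Setting $Z:=F^2\ge 0$, we then have $\mathds{E}[Z]=2\|H\|_F^2$ and $\mathds{E}[Z^2]=\mathds{E}[F^4]\le 9^k\,\mathds{E}[Z]^2$. Applying the Paley--Zygmund inequality (\cref{lem:p-z}) with $\theta=\tfrac12$ gives
\[
\Pr\bigl[\,F^2 > \tfrac12\mathds{E}[Z]\,\bigr]\;\ge\;\bigl(1-\tfrac12\bigr)^2\cdot\frac{\mathds{E}[Z]^2}{\mathds{E}[Z^2]}\;\ge\;\tfrac14\cdot 9^{-k}.
\]
Since $\tfrac12\mathds{E}[Z]=\|H\|_F^2$, the event $F^2>\|H\|_F^2$ is exactly $|f(x)-f(y)|>\|H\|_F$, which implies $|f(x)-f(y)|\ge\|H\|_F$; combining with the identification of $\Lambda(H,\|H\|_F)$ above yields $\Lambda(H,\|H\|_F)\ge\tfrac14\cdot 9^{-k}$. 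The degenerate case $H=0$ is immediate, since then $\|H\|_F=0$ and $\Lambda(H,0)=1$.

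I do not expect a serious obstacle here: the only genuinely nontrivial input is the fourth-moment bound, and that is supplied verbatim by \cref{thm:low-degree-Lq}. The care required is entirely bookkeeping — confirming that $F$ has degree at most $k$ (not $2k$) on the $2n$-variable cube, matching the $\mathds{E}[f^2]=\|H\|_F^2$ normalization to the \emph{normalized} Frobenius norm, and noting that the strict inequality produced by Paley--Zygmund still lies inside the closed threshold $|\lambda_j-\lambda_k|\ge\|H\|_F$ appearing in the definition of $\Lambda$. I would also remark that the choice $\theta=\tfrac12$ is merely for a clean constant; any fixed $\theta\in(0,1)$ works and only changes the numerical factor in front of $9^{-k}$.
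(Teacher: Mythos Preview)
Your proposal is correct and follows essentially the same route as the paper: identify eigenvalues with values of a degree-$k$ Boolean function $f$, form $F(x,y)=f(x)-f(y)$ on the doubled cube, bound $\mathds{E}[F^4]\le 9^k(\mathds{E}[F^2])^2$ via \cref{thm:low-degree-Lq} at $q=4$, and finish with Paley--Zygmund at $\theta=\tfrac12$. Your write-up is in fact slightly more careful than the paper's own proof in handling the strict-versus-nonstrict inequality from Paley--Zygmund and the degenerate case $H=0$.
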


\begin{proof}
Define a map $g:\{I,Z\}\to\{0,1\}$ by $g(I)=0$ and $g(Z)=1$.  
Each Pauli string $P\in\{I,Z\}^{\otimes n}$ corresponds to a bit string $p=g(P)\in\{0,1\}^n$, and one checks that
\begin{equation}
    P = \sum_{s\in\{0,1\}^n} (-1)^{s\cdot p}\,\ket{s}\bra{s},
\end{equation}
where $s\cdot p$ denotes the standard inner product over $\mathds{F}_2^n$.  
Writing $H$ in this basis, define
\begin{equation}
    f(s) = \sum_{\substack{p\in\{0,1\}^n,\, |p|\le k}} (-1)^{s\cdot p} \, \alpha_p,
\end{equation}
so that
\begin{equation}
    H = \sum_{s\in\{0,1\}^n} f(s)\ket{s}\bra{s}.
\end{equation}
Thus the values $f(s)$, for $s\in\{0,1\}^n$, are precisely the eigenvalues of $H$. Since $H$ is traceless, we have
\begin{equation}
    \mathds{E}_s[f(s)] = \frac{1}{2^n}\sum_{s} f(s) = 0.
\end{equation}
Consider independent, uniformly random $s,t \in \{0,1\}^n$, and define $F(s,t) = f(s) - f(t)$. We regard $F$ as a random variable over $(s,t)$ and analyze its second and fourth moments.

\medskip
\noindent\textit{Second moment.}
We first compute $\mathds{E}[f(s)^2]$.  Using orthogonality of characters on the Boolean cube,
\begin{align}
    \mathds{E}_s[f(s)^2]
    &= \frac{1}{2^n}\sum_{s}\left(\sum_{p_1} (-1)^{s\cdot p_1}\alpha_{p_1}\right)
                         \left(\sum_{p_2} (-1)^{s\cdot p_2}\alpha_{p_2}\right) \nonumber \\
    &= \frac{1}{2^n}\sum_{p_1,p_2}\alpha_{p_1}\alpha_{p_2}
       \sum_{s} (-1)^{s\cdot(p_1+p_2)}.
\end{align}
For $p_1\neq p_2$, exactly half of the strings $s$ satisfy $s\cdot(p_1+p_2)\equiv 0 \pmod{2}$ and half satisfy $s\cdot(p_1+p_2)\equiv 1 \pmod{2}$, so
\begin{equation}
    \sum_{s} (-1)^{s\cdot(p_1+p_2)} = 0.
\end{equation}

\medskip
For $p_1 = p_2 = p$, we have $\sum_{s} (-1)^{s\cdot 0} = 2^n$.  
Hence
\begin{equation}
    \mathds{E}_s[f(s)^2] = \sum_{p} \alpha_p^2.
\end{equation}
On the other hand, by the normalization of the Frobenius norm in the Pauli basis,
\begin{align}
    \|H\|_F^2 & = 2^{-n}\operatorname{Tr}(H^2) \nonumber\\
    & = \sum_{P} \alpha_P^2 = \sum_{p} \alpha_p^2,
\end{align}
so $\mathds{E}_s[f(s)^2] = \|H\|_F^2$.

\medskip
Using independence of $s$ and $t$ and the fact that $\mathds{E}[f]=0$, we obtain
\begin{align}
    \mathds{E}_{s,t}[F(s,t)^2]
    &= \mathds{E}_{s,t}[(f(s)-f(t))^2] \nonumber \\
    &= \mathds{E}_s[f(s)^2] + \mathds{E}_t[f(t)^2] - 2\,\mathds{E}_s[f(s)]\,\mathds{E}_t[f(t)] \nonumber \\
    &= 2\,\mathds{E}_s[f(s)^2] \nonumber \\
    &= 2\|H\|_F^2.
\end{align}
Thus $\mathds{E}[F^2] = 2\|H\|_F^2$.

\medskip
\noindent\textit{Fourth moment.}
We next bound $\mathds{E}[F^4]$ from above.  
The function $f(s)$ is a multilinear polynomial of degree at most $k$ in the $n$ Boolean variables $s\in\{0,1\}^n$ (or equivalently $\{\pm1\}^n$ after recoding).  
Consequently, $F(s,t) = f(s) - f(t)$ is a degree-$k$ polynomial in the $2n$ Boolean variables $(s,t)$.  
By hypercontractivity for Boolean functions (\cref{thm:hypercontractivity,thm:low-degree-Lq}), for any $q\ge 2$,
\begin{equation}
    \|F\|_q \le (q-1)^{k/2} \|F\|_2.
\end{equation}
Specializing to $q=4$ gives
\begin{align}
    \mathds{E}[F^4] &= \|F\|_4^4 \nonumber\\
    & \le 9^k \|F\|_2^4 \nonumber\\
    & = 9^k (\mathds{E}[F^2])^2.
\end{align}

\medskip
\noindent\textit{Applying Paley--Zygmund.}
Let $Y := F^2$, which is a nonnegative random variable.  
Applying Paley--Zygmund inequality (\cref{lem:p-z}) to $Y$ with parameter $\theta = 1/2$, we obtain
\begin{align}
    \Pr\left[ Y \ge \frac12\,\mathds{E}[Y] \right]
    & \ge
    \left(1 - \frac12\right)^2 \frac{\mathds{E}[Y]^2}{\mathds{E}[Y^2]} \nonumber \\
    & =
    \frac14 \frac{(\mathds{E}[F^2])^2}{\mathds{E}[F^4]}.
\end{align}
Using the moment bounds derived above,
\begin{align}
    \Pr\left[ Y \ge \frac12\, \mathds{E}[Y] \right]
    &\ge \frac14 \cdot \frac{(\mathds{E}[F^2])^2}{9^k (\mathds{E}[F^2])^2} \nonumber \\
    & = \frac14 \cdot 9^{-k}.
\end{align}
Since $\mathds{E}[Y] = \mathds{E}[F^2] = 2\|H\|_F^2$, the event $Y \ge \mathds{E}[Y]/2$ is exactly the event $F^2 \ge \|H\|_F^2$. Hence
\begin{equation}
    \Pr\left[ F^2 \ge \|H\|_F^2 \right] \ge \frac14 \cdot 9^{-k},
\end{equation}
or equivalently
\begin{equation}
    \Pr\left[|F| \ge \|H\|_F \right] \ge \frac14 \cdot 9^{-k}.
\end{equation}

\medskip
By construction, sampling $(s,t)$ uniformly from $\{0,1\}^n \times \{0,1\}^n$ is equivalent to choosing an ordered pair of eigenvalues $(\lambda_j,\lambda_k)$ of $H$ uniformly at random (with replacement).  
Thus the probability on the left-hand side is exactly the fraction of eigenvalue pairs $(\lambda_j,\lambda_k)$ such that $|\lambda_j - \lambda_k| \ge \|H\|_F$.  
This fraction is precisely $\Lambda(H,\|H\|_F)$, so we obtain
\begin{equation}
    \Lambda(H,\|H\|_F) \ge \frac14 \cdot 9^{-k}
\end{equation}
as claimed.
\end{proof}

\subsection{Diagonal Hamiltonians}

The preceding argument applies verbatim to any Hamiltonian that is diagonal in the computational basis.  
Such a Hamiltonian can be written as
\begin{align}
    H = \sum_{P \in S} \alpha_P P, \quad
    S = \bigotimes_{i=1}^n \{ I,\, Q^{(i)} \},
\end{align}
where $Q^{(i)}$ is an arbitrary single-qubit Pauli operator acting on the $i$-th qubit.

\medskip
For every diagonal operator set $S$, there exists a unitary $U_S$ that maps the standard $Z$-diagonal Pauli operators $\{I,Z\}^{\otimes n}$ onto $S$.  
Therefore, any diagonal $k$-local Hamiltonian can be expressed as
\begin{equation}
    H = U_S \left( \sum_{P \in \{I,Z\}^{\otimes n},\, |P|\le k} \alpha_P P \right) U_S^\dagger
\end{equation}
for appropriate coefficients $\alpha_P$.  
Since the eigenvalues of a Hamiltonian are invariant under unitary conjugation, eigenvalue-gap bound for $Z$-diagonal $k$-local Hamiltonians (\cref{pro:Z-basis}) applies directly, implying that every diagonal $k$-local Hamiltonian satisfies
\begin{equation}
    \Lambda(H,\|H\|_F) \ge \frac{1}{4} \cdot 9^{-k}.
\end{equation}
Combining this with sufficient spectral condition (\cref{lem:spectral-condition}), we conclude that any diagonal $k$-local Hamiltonian admits an intolerant certification procedure using total evolution time of order $\mathcal{O}(9^k/\varepsilon)$ with constant success probability.

\section{Certification algorithm for general local Hamiltonians}

In this section we describe and analyze our main certification algorithm for general $k$-local Hamiltonians.
The algorithm has three components:
\begin{enumerate}
    \item a randomized choice of a diagonal Pauli basis that preserves a non-negligible fraction of the (normalized) Frobenius norm,
    \item a random twirling procedure that approximately projects onto this diagonal subspace while suppressing the remaining terms, and
    \item the Bell-sampling spectral condition from~\cref{lem:spectral-condition}, together with Trotterization to implement the required evolutions.
\end{enumerate}
Throughout, we assume oracle access to the forward-time evolutions $e^{-itH}$ for the unknown Hamiltonian $H$ and $e^{-itH_0}$ for the known reference Hamiltonian $H_0$.

\subsection{Randomized diagonal basis selection}\label{sec:rand-diag-sel}

We first show that a random choice of local Pauli bases picks out a diagonal sub-Hamiltonian that retains a nontrivial fraction of the normalized Frobenius norm.

\begin{proposition}[(Randomized diagonal basis selection)]\label{pro:random-basis}
Let $H$ be an $n$-qubit, $k$-local, traceless Hamiltonian with Pauli expansion
\begin{equation}
    H = \sum_{P\in\{I,X,Y,Z\}^{\otimes n},\, |P|\le k} \alpha_P P.
\end{equation}
For independent random choices $Q^{(1)},\dots,Q^{(n)} \in \{X,Y,Z\}$, define
\begin{equation}
    S := \bigotimes_{i=1}^n \{I,Q^{(i)}\},\qquad
    H_{\mathrm{eff}} := \sum_{P\in S,\, |P|\le k} \alpha_P P.
\end{equation}
Then
\begin{equation}
    \Pr\left[\|H_{\mathrm{eff}}\|_F \ge \frac{\|H\|_F}{\sqrt{2} \cdot 3^{k/2}}\right] \ge \frac{1}{4\cdot 3^k}.
\end{equation}
\end{proposition}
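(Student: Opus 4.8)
The natural approach is the second-moment / Paley–Zygmund method, mirroring the structure of the diagonal case. Set $Z := \|H_{\mathrm{eff}}\|_F^2 = \sum_{P : P \in S,\ |P|\le k} \alpha_P^2$. For each Pauli string $P$ with weight $|P|=w\le k$, let $\chi_P$ be the indicator of the event $P \in S$, i.e.\ that for every qubit $i$ in the support of $P$ the random choice $Q^{(i)}$ equals the single-qubit Pauli $P_i$. Since the $Q^{(i)}$ are independent and uniform over $\{X,Y,Z\}$, we have $\Pr[\chi_P=1] = 3^{-w}$, so $\mathds{E}[Z] = \sum_P \alpha_P^2\, 3^{-|P|} \ge 3^{-k}\sum_P \alpha_P^2 = 3^{-k}\|H\|_F^2$, using $|P|\le k$. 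This is the first key step.

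The second key step is to bound $\mathds{E}[Z^2]$ from above. Expanding, $\mathds{E}[Z^2] = \sum_{P,P'} \alpha_P^2 \alpha_{P'}^2\, \Pr[\chi_P = \chi_{P'} = 1]$. The joint event requires $Q^{(i)} = P_i$ on $\supp(P)$ and $Q^{(i)} = P'_i$ on $\supp(P')$; on the overlap $\supp(P)\cap\supp(P')$ these constraints are consistent only if $P_i = P'_i$ there, in which case the joint probability is $3^{-|P\cup P'|} = 3^{-(|P|+|P'|-|P\cap P'|)}$, and otherwise the probability is $0$. In all cases $\Pr[\chi_P=\chi_{P'}=1] \le 3^{-|P|}\cdot 3^{-|P'|} \cdot 3^{|P\cap P'|} \le 3^{-|P|}\cdot 3^{-|P'|}\cdot 3^{\min(|P|,|P'|)} \le 3^{-|P|}$ (and symmetrically $\le 3^{-|P'|}$), hence $\le 3^{-\max(|P|,|P'|)/?}$—more simply, $\Pr[\chi_P=\chi_{P'}=1]\le 3^{-|P|}$ since $\chi_P\chi_{P'}\le\chi_P$, and likewise $\le 3^{-|P'|}$; so it is at most $\min(3^{-|P|},3^{-|P'|})$. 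A clean bound: $\mathds{E}[Z^2] \le \sum_{P,P'}\alpha_P^2\alpha_{P'}^2 \min(3^{-|P|},3^{-|P'|}) \le \sum_{P,P'}\alpha_P^2\alpha_{P'}^2\, 3^{-k}\cdot 3^{k}$—no. Let me be careful: I want $\mathds{E}[Z^2]\le 3^{-k}\cdot(\text{something})$ or directly a bound like $\mathds{E}[Z^2]\le \big(\sum_P\alpha_P^2 3^{-|P|/2}\big)^2$? The route that actually gives the stated constants is: $\Pr[\chi_P=\chi_{P'}=1]\le 3^{-|P|-|P'|+\min(|P|,|P'|)} = 3^{-\max(|P|,|P'|)}\le 3^{-0}=1$ — too weak. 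Instead use $\Pr[\chi_P=\chi_{P'}=1]\le 3^{-\max(|P|,|P'|)}\le 3^{-(|P|+|P'|)/2}$ by AM of the exponents is $\ge$ min... actually $\max \ge \tfrac{|P|+|P'|}{2}$, so $3^{-\max}\le 3^{-(|P|+|P'|)/2}$. Therefore
\begin{align}
\mathds{E}[Z^2] \;\le\; \sum_{P,P'} \alpha_P^2\alpha_{P'}^2\, 3^{-|P|/2}\,3^{-|P'|/2}
\;=\; \Big(\sum_P \alpha_P^2\, 3^{-|P|/2}\Big)^2
\;\le\; \Big(\sum_P \alpha_P^2\Big)^2 \;=\; \|H\|_F^4,
\end{align}
where the last inequality uses $3^{-|P|/2}\le 1$. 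Combined with $\mathds{E}[Z]\ge 3^{-k}\|H\|_F^2$, the ratio $\mathds{E}[Z]^2/\mathds{E}[Z^2] \ge 3^{-2k}\|H\|_F^4/\|H\|_F^4 = 3^{-2k}$ — this gives a $3^{-2k}$ lower bound, worse than the claimed $\tfrac{1}{4\cdot 3^k}$. To recover $3^{-k}$ one needs the sharper observation that $\mathds{E}[Z^2]\le \mathds{E}[Z]\cdot\max_P 3^{-|P|}\cdot(\dots)$; concretely, since $\Pr[\chi_P=\chi_{P'}=1]\le 3^{-|P'|}$, summing over $P'$ first gives $\sum_{P'}\alpha_{P'}^2\Pr[\chi_P=\chi_{P'}=1]\le \sum_{P'}\alpha_{P'}^2 = \|H\|_F^2$, hence $\mathds{E}[Z^2]\le \|H\|_F^2\sum_P\alpha_P^2 3^{-|P|}\cdot 3^{?}$ — wait, more directly: $\mathds{E}[Z^2]=\sum_P\alpha_P^2\sum_{P'}\alpha_{P'}^2\Pr[\chi_P=\chi_{P'}=1]$, and for \emph{fixed} $P$, using $\Pr[\chi_P=\chi_{P'}=1]\le\Pr[\chi_P=1]=3^{-|P|}$, the inner sum is $\le 3^{-|P|}\sum_{P'}\alpha_{P'}^2 = 3^{-|P|}\|H\|_F^2$, so $\mathds{E}[Z^2]\le \|H\|_F^2\sum_P\alpha_P^2 3^{-|P|} = \|H\|_F^2\,\mathds{E}[Z]$. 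Thus $\mathds{E}[Z]^2/\mathds{E}[Z^2]\ge \mathds{E}[Z]/\|H\|_F^2\ge 3^{-k}$. This is the clean identity I want — the main obstacle is spotting that $\mathds{E}[Z^2]\le \|H\|_F^2\,\mathds{E}[Z]$, which follows from the trivial domination $\chi_P\chi_{P'}\le\chi_P$ applied inside one of the two sums.

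Finally, apply Paley–Zygmund (\cref{lem:p-z}) to $Z\ge 0$ with $\theta = \tfrac12$:
\begin{align}
\Pr\Big[Z > \tfrac12\,\mathds{E}[Z]\Big] \;\ge\; \Big(1-\tfrac12\Big)^2\frac{\mathds{E}[Z]^2}{\mathds{E}[Z^2]} \;\ge\; \frac14\cdot 3^{-k}.
\end{align}
On this event, $\|H_{\mathrm{eff}}\|_F^2 = Z > \tfrac12\mathds{E}[Z]\ge \tfrac12\cdot 3^{-k}\|H\|_F^2$, i.e.\ $\|H_{\mathrm{eff}}\|_F \ge \|H\|_F/(\sqrt2\cdot 3^{k/2})$, which is exactly the claimed bound. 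The whole proof is then: (i) compute $\mathds{E}[Z]$ via independence and $|P|\le k$; (ii) bound $\mathds{E}[Z^2]\le\|H\|_F^2\,\mathds{E}[Z]$ via $\chi_P\chi_{P'}\le\chi_P$; (iii) invoke Paley–Zygmund. I expect step (ii) — getting the second-moment bound tight enough to yield $3^{-k}$ rather than a cruder $3^{-2k}$ — to be the one requiring the right trick, though it is short once found.
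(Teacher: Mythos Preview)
Your proposal is correct, and the overall architecture (first/second-moment bounds plus Paley--Zygmund with $\theta=1/2$) matches the paper. The one genuine difference is in the second-moment step. The paper uses the support-union bound $\mathds{E}[\chi_P\chi_{P'}]\le 3^{-|\mathrm{supp}(P)\cup\mathrm{supp}(P')|}\le 3^{-(|P|+|P'|)/2}$ to obtain $\mathds{E}[Z^2]\le\big(\sum_P\alpha_P^2\,3^{-|P|/2}\big)^2$, and then observes $\mathds{E}[Z]=\sum_P\alpha_P^2\,3^{-|P|}\ge 3^{-k/2}\sum_P\alpha_P^2\,3^{-|P|/2}$ to conclude $\mathds{E}[Z^2]\le 3^k\,\mathds{E}[Z]^2$. (In your scratch work you actually reached the same intermediate inequality but then threw away too much by bounding $\sum_P\alpha_P^2\,3^{-|P|/2}\le\|H\|_F^2$, which is why you got $3^{-2k}$ on that branch.) Your final route---the trivial domination $\chi_P\chi_{P'}\le\chi_P$ giving $\mathds{E}[Z^2]\le\|H\|_F^2\,\mathds{E}[Z]$, hence $\mathds{E}[Z]^2/\mathds{E}[Z^2]\ge\mathds{E}[Z]/\|H\|_F^2\ge 3^{-k}$---is more elementary and arguably cleaner than the paper's; it avoids the arithmetic with $3^{-|P|/2}$ weights entirely. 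Both arguments yield exactly the same Paley--Zygmund ratio $3^{-k}$ and hence the same final bound.
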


\begin{proof}[Proof sketch]
We provide a brief sketch; the full proof appears in~\cref{app:proof-random-basis}.

\medskip
\noindent For each Pauli string $P$ with weight $|P|\le k$, the probability that $P$ lies in the random subspace $S$
equals $3^{-|P|}$, since each non-identity tensor factor of $P$ must match the corresponding $Q^{(i)}$.
Consequently,
\begin{align}
    \mathds{E}[\|H_{\mathrm{eff}}\|_F^2]
    & = \sum_{P} \alpha_P^2\, \Pr[P\in S]\nonumber \\
    & \ge 3^{-k} \sum_{P} \alpha_P^2 \nonumber\\
    & = 3^{-k}\|H\|_F^2.
\end{align}
A second-moment bound yields $\mathds{E}[\|H_{\mathrm{eff}}\|_F^4]\le 3^k\,
(\mathds{E}[\|H_{\mathrm{eff}}\|_F^2])^2$.
Applying the Paley--Zygmund inequality to the nonnegative random variable $\|H_{\mathrm{eff}}\|_F^2$ (with $\theta=1/2$) gives
\begin{equation}
    \Pr\left[\|H_{\mathrm{eff}}\|_F^2 \ge \frac12\,\mathds{E}[\|H_{\mathrm{eff}}\|_F^2]\right] \ge \frac{1}{4\cdot 3^k}.    
\end{equation}
Combining this with the lower bound on $\mathds{E}[\|H_{\mathrm{eff}}\|_F^2]$ yields the claim.
\end{proof}

\subsection{Extracting the effective diagonal Hamiltonian}\label{sec:ext-eff-diag}

We next show how to suppress the components of $H$ that lie outside the chosen diagonal subspace $S$, while preserving $H_{\mathrm{eff}}$.

\begin{proposition}[(Random twirling toward the diagonal subspace)]\label{pro:z-twirl}
Fix a choice of $Q^{(1)},\dots,Q^{(n)}\in\{X,Y,Z\}$ and let $S = \bigotimes_{j=1}^n\{I,Q^{(j)}\}$.  Decompose
\begin{equation}
    H_1 := H - H_0 = H_{\mathrm{eff}} + H_1',
\end{equation}
where $H_{\mathrm{eff}} = \sum_{P\in S} \alpha_P P$ and $H_1' = \sum_{P\notin S} \alpha_P P$.  
Define recursively
\begin{align}
    H_i = \frac{1}{2}(H_{i-1}+P_i H_{i-1} P_i), \quad
    H_i' = \frac{1}{2}(H_{i-1}'+P_i H_{i-1}' P_i),
\end{align}
for $i\ge 2$, where each $P_i$ is chosen independently and uniformly at random from $S$.  
Then $H_i = H_{\mathrm{eff}} + H_i'$ for all $i$, and for every integer $T\ge 1$,
\begin{equation}\label{eq:prop-z}
    \Pr\left[\|H_T'\|_F \le \frac{2}{2^{T/2}}\|H\|_F\right] \ge \frac{3}{4}.
\end{equation}
\end{proposition}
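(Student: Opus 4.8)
\emph{Overview of the approach.} The plan is to show that the twirl leaves $H_{\mathrm{eff}}$ untouched and contracts the orthogonal part $H_i'$ geometrically in expected squared Frobenius norm, and then to finish with Markov's inequality. First I would record that any two elements of $S=\bigotimes_{j=1}^n\{I,Q^{(j)}\}$ commute --- on each qubit the factors lie in the abelian pair $\{I,Q^{(j)}\}$, and Paulis on different qubits commute --- so $P_iPP_i=P$ for every $P_i,P\in S$. Consequently $\tfrac12(H_{\mathrm{eff}}+P_iH_{\mathrm{eff}}P_i)=H_{\mathrm{eff}}$, and an immediate induction on $i$ gives $H_i=H_{\mathrm{eff}}+H_i'$ with $H_i'=\tfrac12(H_{i-1}'+P_iH_{i-1}'P_i)$, which is exactly the claimed decomposition. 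It therefore suffices to control $\|H_T'\|_F$.

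\emph{Tracking Pauli coefficients through the twirl.} Write $H_1'=\sum_{P\notin S}\alpha_PP$. For any Pauli string $P$ and any $P_i\in S$ one has $P_iPP_i=\pm P$, with sign $+1$ precisely when $[P_i,P]=0$; hence a single twirl $X\mapsto\tfrac12(X+P_iXP_i)$ multiplies the coefficient of $P$ by $1$ if $P$ commutes with $P_i$ and by $0$ otherwise, and never moves weight between distinct Pauli strings. Iterating, if $\mathcal G_i:=\{P\notin S:\ [P_\ell,P]=0\ \text{for all }\ell\le i\}$ is the set of strings surviving the first $i$ twirls, then $H_i'=\sum_{P\in\mathcal G_i}\alpha_PP$ and $\|H_i'\|_F^2=\sum_{P\in\mathcal G_i}\alpha_P^2$. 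The crucial point is that for each fixed $P$ the map $P_i\mapsto(\text{sign of }P_iPP_i)$ is a homomorphism $S\to\{\pm1\}$ by bilinearity of the Pauli commutation form, and it is trivial iff $P$ commutes with all of $S$, i.e.\ iff the $j$-th tensor factor of $P$ lies in $\{I,Q^{(j)}\}$ for every $j$, i.e.\ iff $P\in S$. Thus for $P\notin S$ the homomorphism is nontrivial, so its kernel has index $2$ in $S$ and $\Pr_{P_i\sim\mathrm{Unif}(S)}\left[[P_i,P]=0\right]=\tfrac12$. This ``$S$ equals its own centralizer'' fact is the one genuinely structural ingredient; everything else is bookkeeping, and I expect this centralizer computation to be the main obstacle, since it is what forces the clean factor $\tfrac12$ rather than a weaker contraction.

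\emph{Second moment and Markov.} Since $P_1,\dots,P_T$ are independent, for each $P\notin S$ the survival probability factorizes: $\Pr[P\in\mathcal G_T]=\prod_{\ell=1}^T\Pr\left[[P_\ell,P]=0\right]=2^{-T}$. Therefore
\[
\mathds{E}\!\left[\|H_T'\|_F^2\right]=\sum_{P\notin S}\alpha_P^2\,\Pr[P\in\mathcal G_T]=2^{-T}\|H_1'\|_F^2\le 2^{-T}\|H-H_0\|_F^2 ,
\]
using $\|H_1'\|_F\le\|H_1\|_F=\|H-H_0\|_F$. Applying Markov's inequality (\cref{lem:markov}) to the nonnegative random variable $\|H_T'\|_F^2$ at threshold $4\cdot 2^{-T}\|H-H_0\|_F^2$ gives $\Pr\!\left[\|H_T'\|_F^2\ge 4\cdot 2^{-T}\|H-H_0\|_F^2\right]\le\tfrac14$, i.e.\ $\|H_T'\|_F\le 2\cdot 2^{-T/2}\|H-H_0\|_F$ with probability at least $3/4$, which is \cref{eq:prop-z} (with $\|H-H_0\|_F=\|H_1\|_F$ on the right-hand side). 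I would also flag the degenerate cases --- $P$ equal to the identity, or $\alpha_P=0$ --- which are harmless, since $H_1$ is traceless and such terms never appear in $H_1'$.
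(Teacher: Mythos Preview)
Your proof is correct and follows essentially the same route as the paper: show $S$ is abelian so the twirl fixes $H_{\mathrm{eff}}$, argue that each $P\notin S$ anticommutes with a uniformly random element of $S$ with probability exactly $\tfrac12$, deduce $\mathds{E}[\|H_T'\|_F^2]=2^{-T}\|H_1'\|_F^2$, and conclude by Markov. Your homomorphism/centraliser phrasing (``the sign map $S\to\{\pm1\}$ is a nontrivial character for $P\notin S$, hence its kernel has index $2$'') is a clean repackaging of the paper's direct parity computation on the set $B(P)=\{j:P_j\notin\{I,Q^{(j)}\}\}$, and you correctly write $\|H-H_0\|_F$ on the right-hand side where the stated proposition has $\|H\|_F$ (the paper itself uses $\|H-H_0\|_F$ in all downstream applications).
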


\begin{proof}[Proof sketch]
We provide a brief sketch; the full proof appears in~\cref{app:proof-z-twirl}.

\medskip
\noindent Since $S=\bigotimes_{j=1}^n\{I,Q^{(j)}\}$ is abelian, every $P_i\in S$ commutes with every $P\in S$, hence the averaging map
$X\mapsto \frac12(X+P_iXP_i)$ fixes $H_{\mathrm{eff}}$ and yields $H_i=H_{\mathrm{eff}}+H_i'$ by induction.

\medskip
For $P\notin S$, there is a qubit where $P$ anticommutes with $Q^{(j)}$, so a uniformly random $P_i\in S$ anticommutes with $P$ with probability $1/2$.
In that case $\frac12(P+P_iPP_i)=0$, so each such Pauli coefficient survives $T$ steps with probability $2^{-T}$.
Therefore $\mathds{E}[\|H_T'\|_F^2]=2^{-T}\|H_1'\|_F^2\le 2^{-T}\|H\|_F^2$, and Markov's inequality implies~\cref{eq:prop-z}
\end{proof}

Combining~\cref{pro:random-basis,pro:z-twirl}, and choosing $T=\Theta(k)$ so that $2^{T} \ge 2^{10}\cdot 3^{5k}$, we obtain with constant probability (over the choices of $Q^{(j)}$ and the twirling sequence) that
\begin{equation}
    \|H_{\mathrm{eff}}\|_F \ge \frac{\|H-H_0\|_F}{\sqrt{2}\cdot3^{k/2}},\qquad
    \|H_T'\|_F \le \frac{1}{8\sqrt{2}\cdot3^k}\,\|H_{\mathrm{eff}}\|_F.
\end{equation}
Hence the final Hamiltonian $H_T = H_{\mathrm{eff}} + H_T'$ is close to $H_{\mathrm{eff}}$ in Frobenius norm (see~\cref{subsec:stab,subsec:put})
and inherits its spectral-gap structure up to a controlled perturbation.

\subsection{Stability of eigenvalue gaps under perturbations}\label{subsec:stab}

We derive quantitative bounds on how eigenvalue-gap statistics behave under perturbations of small Frobenius norm.

\begin{proposition}[(Stability of eigenvalue-gap proportion)]\label{pro:eig-stability}
Let $A,B \in \mathds{C}^{N\times N}$ be Hermitian matrices.  
Suppose that for some $\varepsilon>0$ and $p\in[0,1]$, $\Lambda(A,\varepsilon) \ge p$,  $\|B\|_F \le q\varepsilon$ for some $q\ge 0$. Then
\begin{equation}
    \Lambda(A+B,\varepsilon/2) \ge \max\{0,\, p - 32q^2\}.
\end{equation}
\end{proposition}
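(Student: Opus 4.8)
The plan is to transport the abundance of well-separated eigenvalue pairs from $A$ to $C:=A+B$ via the Hoffman--Wielandt inequality, accepting only a small loss to the perturbation. First I would apply \cref{lem:hoffman-wielandt} to the Hermitian pair $(A,C)$: there is a permutation $\pi$ of $[N]$ with $\tfrac1N\sum_{i}\delta_i^2\le\|A-C\|_F^2=\|B\|_F^2\le q^2\varepsilon^2$, where $\delta_i:=\bigl|\lambda_i(A)-\lambda_{\pi(i)}(C)\bigr|$. This gives the aggregate displacement bound $\sum_i\delta_i^2\le Nq^2\varepsilon^2$, which is the only input we extract from the perturbation size.

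Next I would quarantine the badly displaced eigenvalues. Let $\mathcal{B}:=\{i:\delta_i>\varepsilon/4\}$; by Markov's inequality (\cref{lem:markov}) applied to $\{\delta_i^2\}$, each such index contributes more than $\varepsilon^2/16$ to the sum $\sum_i\delta_i^2\le Nq^2\varepsilon^2$, so $|\mathcal{B}|\le 16Nq^2$. For any two good indices $i,j\notin\mathcal{B}$, the triangle inequality yields $\bigl|\lambda_{\pi(i)}(C)-\lambda_{\pi(j)}(C)\bigr|\ge\bigl|\lambda_i(A)-\lambda_j(A)\bigr|-\delta_i-\delta_j\ge\bigl|\lambda_i(A)-\lambda_j(A)\bigr|-\varepsilon/2$; in particular, if the $A$-gap of $(i,j)$ is at least $\varepsilon$, then the $C$-gap of $(\pi(i),\pi(j))$ is at least $\varepsilon/2$.

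The remaining step is a counting argument. By hypothesis at least $pN^2$ ordered pairs $(i,j)$ satisfy $\bigl|\lambda_i(A)-\lambda_j(A)\bigr|\ge\varepsilon$. The ordered pairs touching $\mathcal{B}$ (those with $i\in\mathcal{B}$ or $j\in\mathcal{B}$) number at most $2|\mathcal{B}|N\le 32N^2q^2$, so at least $N^2(p-32q^2)$ of the well-separated $A$-pairs have both coordinates good. Since $\pi$ is a bijection, $(i,j)\mapsto(\pi(i),\pi(j))$ carries these distinct pairs to distinct pairs of $C$-indices, each with gap at least $\varepsilon/2$ by the previous paragraph. Hence at least $N^2(p-32q^2)$ ordered $C$-pairs are $\varepsilon/2$-separated, i.e. $\Lambda(A+B,\varepsilon/2)\ge p-32q^2$; together with the trivial bound $\Lambda\ge 0$ this is the claim.

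The only real subtlety — and the point the argument is built around — is that Hoffman--Wielandt controls only an $\ell_2$-average of eigenvalue displacements under the optimal matching, not a uniform bound, so one cannot claim every eigenvalue moves little, merely that few move a lot; the bad set $\mathcal{B}$ isolates exactly those few. Choosing the threshold at precisely $\varepsilon/4$ is what makes the surviving-gap bound come out as $\varepsilon-2\cdot\varepsilon/4=\varepsilon/2$ and the pair count as $32q^2N^2$, matching the stated constant; any looser threshold would weaken the constant but leave the structure of the proof unchanged.
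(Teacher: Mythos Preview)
Your proof is correct and essentially identical to the paper's: both apply Hoffman--Wielandt, define the bad set at threshold $\varepsilon/4$, bound its size by $16Nq^2$ via Markov, and then count the surviving well-separated pairs to lose at most a $32q^2$ fraction. The only cosmetic difference is that you keep the Hoffman--Wielandt permutation $\pi$ explicit, whereas the paper absorbs it into the eigenvalue labeling.
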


\begin{proof}
Let $\{\lambda_i\}$ and $\{\mu_i\}$ denote the eigenvalues of $A$ and $A+B$, ordered so that the Hoffman--Wielandt inequality (\cref{lem:hoffman-wielandt}) applies:
\begin{equation}
    \frac{1}{N}\sum_{i=1}^N (\lambda_i - \mu_i)^2 
    \le \|B\|_F^2 
    \le q^2 \varepsilon^2.
\end{equation}
Define $T := \left\{i : |\lambda_i - \mu_i| > \varepsilon/4 \right\}$. Then
\begin{align}
    \frac{|T|}{N} \left(\frac{\varepsilon}{4}\right)^2
    &\le \frac{1}{N}\sum_{i\in T} (\lambda_i - \mu_i)^2 \nonumber\\
    &\le \frac{1}{N}\sum_{i=1}^N (\lambda_i - \mu_i)^2 \nonumber\\
    &\le q^2 \varepsilon^2,
\end{align}
which gives $|T|/N \le 16 q^2$.

\medskip
For indices $i,j \notin T$, we have
\begin{align}
    |\mu_i - \mu_j|
    &\ge |\lambda_i - \lambda_j|
        - |\lambda_i - \mu_i|
        - |\lambda_j - \mu_j| \\
    &\ge |\lambda_i - \lambda_j| - \frac{\varepsilon}{2}.
\end{align}
Hence any pair $(i,j)$ with $|\lambda_i - \lambda_j| \ge \varepsilon$ and $i,j \notin T$ satisfies $|\mu_i - \mu_j| \ge \varepsilon/2$.

\medskip
By assumption, at least a proportion $p$ of all pairs $(i,j)$ obey $|\lambda_i - \lambda_j| \ge \varepsilon$.  
Among all pairs, at most a fraction
\begin{equation}
    1 - (1 - 16q^2)^2 \le 32q^2
\end{equation}
involve an index in $T$.  
Therefore at least a proportion $p - 32q^2$ of all pairs have both indices outside $T$ and hence satisfy $|\mu_i - \mu_j| \ge \varepsilon/2$. By definition, this means $\Lambda(A+B,\varepsilon/2) \ge p - 32q^2$. Since $\Lambda(\cdot,\cdot) \ge 0$, this yields the stated bound.
\end{proof}

\subsection{Implementing the effective evolution by Trotterization}\label{sec:trott}

To apply the Bell-sampling analysis from~\cref{sec:sufficient-conditions} to $H_T$, we require an implementation of time evolution under $H_T = H_{\mathrm{eff}} + H_T'$ using only access to $e^{-itH}$ and $e^{-itH_0}$.  
This can be achieved using a symmetric Trotterization procedure.

\begin{theorem}[(Trotterization)]\label{thm:trotter}
Let $t>0$ and $\varepsilon_{\mathrm{Trott}}>0$, and let $H_T$ be the Hamiltonian defined in~\cref{pro:z-twirl}. Then one can implement $e^{-itH_T}$ up to diamond-norm error $\varepsilon_{\mathrm{Trott}}$ using  
\begin{equation}
    \ell = \mathcal{O}\left(2^T \sqrt{\frac{t^3}{\varepsilon_{\mathrm{Trott}}}}\right)
\end{equation}
queries to the time-evolution operators of $H$ and $H_0$. Moreover, the total evolution time under $H$ and $H_0$ required to implement $V$ is $\mathcal{O}(t)$.
\end{theorem}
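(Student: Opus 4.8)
The plan is to unfold the recursion that defines $H_T$, realize $e^{-itH_T}$ by a symmetric second-order Trotter--Suzuki product formula over $\mathcal{O}(2^T)$ elementary pieces, and then bound separately the number of oracle calls and the total oracle evolution time. First I would expand $H_i=\tfrac12(H_{i-1}+P_iH_{i-1}P_i)$ and use that $S$ is abelian (so the $P_i$ commute and their ordering is irrelevant) to write
\[
H_T \;=\; \frac{1}{2^{T-1}}\sum_{b\in\{0,1\}^{T-1}} R_b\,(H-H_0)\,R_b \;=\; \sum_{j=1}^{m} A_j,\qquad m = 2^{T},
\]
where $R_b := \prod_{i=2}^{T} P_i^{\,b_{i-1}}\in S$ is a Hermitian Pauli unitary, and the $A_j$ run over the $2^{T-1}$ operators $\tfrac{1}{2^{T-1}}R_bHR_b$ and the $2^{T-1}$ operators $-\tfrac{1}{2^{T-1}}R_bH_0R_b$ (as $b$ ranges over $\{0,1\}^{T-1}$). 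The point of this decomposition is that every $A_j$ exponentiates with a single dressed oracle call: $e^{-isA_j}=R_b\,e^{-i(s/2^{T-1})H}\,R_b$ for the $H$-type terms, and $e^{-isA_j}=R_b\,e^{+i(s/2^{T-1})H_0}\,R_b$ for the $H_0$-type terms --- the reverse-time evolution being harmless because only inverse or controlled access to the \emph{unknown} $H$ is forbidden, while $H_0$ is fully specified and may be simulated (or run backward) freely. In particular, applying $e^{-isA_j}$ consumes exactly $s/2^{T-1}$ units of oracle evolution time, the key bookkeeping fact used below.

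Next I would apply the standard commutator-scaling error bound for the symmetric second-order formula. Let $S_2(s)$ be built from the $m$ exponentials $e^{-isA_j/2}$ --- so $S_2(s)$ uses $2m-1$ of them and evolves each $A_j$ for total time $s$ --- and set $V := S_2(t/r)^r$. Then $\|V-e^{-itH_T}\|_{\mathrm{op}}\le Ct^3/r^2$, where $C$ is controlled by sums of triple commutator norms $\sum_{j_1,j_2,j_3}\|[A_{j_1},[A_{j_2},A_{j_3}]]\|$ (and an analogous smaller term). The one estimate that matters is that $C$ is bounded \emph{independently of $T$}: since $\|A_j\|_{\mathrm{op}}\le L/2^{T-1}$ with $L := \max\{\|H\|_{\mathrm{op}},\|H_0\|_{\mathrm{op}}\}$, each triple commutator has norm $\mathcal{O}\big((L/2^{T-1})^3\big)$, and there are $m^3=2^{3T}$ of them, so the sum is $\mathcal{O}(L^3)$ --- the $2^{3T}$ combinatorial factor is exactly cancelled by the $1/2^{3(T-1)}$ from the weights. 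Hence $C=\mathcal{O}(L^3)$, a $T$-independent quantity, and choosing $r=\Theta\big(\sqrt{C}\,t^{3/2}/\sqrt{\varepsilon_{\mathrm{Trott}}}\big)$ gives $\|V-e^{-itH_T}\|_{\mathrm{op}}\le \varepsilon_{\mathrm{Trott}}/2$; since $\|U(\cdot)U^\dagger - W(\cdot)W^\dagger\|_\diamond \le 2\|U-W\|_{\mathrm{op}}$ for unitaries, this yields diamond-norm error at most $\varepsilon_{\mathrm{Trott}}$.

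The resource count then follows. Each segment $S_2(t/r)$ makes $2m-1=\mathcal{O}(2^T)$ oracle queries, so $\ell=\mathcal{O}(2^T r)=\mathcal{O}\big(2^T\sqrt{t^3/\varepsilon_{\mathrm{Trott}}}\big)$ after absorbing $\sqrt{C}$ into the $\mathcal{O}(\cdot)$; if the prescribed $r$ drops below $1$ one simply takes $r=1$ and the bound still holds. For the evolution time, within each segment every $A_j$ is evolved for total time $t/r$, which by the first step charges only $\tfrac{t/r}{2^{T-1}}$ of oracle time per term; summing over all $m=2^T$ terms gives $2t/r$ of oracle time per segment --- namely $t/r$ under $H$ and $t/r$ under $H_0$ --- and hence $2t$ over the $r$ segments, i.e.\ $\mathcal{O}(t)$ total, and notably independent of $\varepsilon_{\mathrm{Trott}}$. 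The Clifford conjugations $R_b$ contribute only free Pauli gates and no evolution time.

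The hard part will be exactly this uniformity of the Trotter prefactor $C$ in $T$: naively, decomposing $H_T$ into $2^T$ summands would destroy any product-formula bound, and it is precisely the $1/2^{T-1}$ weight on each $A_j$ --- a direct consequence of the twirling recursion in \cref{pro:z-twirl} --- that cancels the combinatorial blow-up while simultaneously keeping the total evolution time at $\mathcal{O}(t)$ rather than $\mathcal{O}(2^T t)$. The other point requiring care is keeping the two accountings consistent --- the (physical) time for which each $A_j$ is evolved versus the oracle time actually charged, which differ by the factor $2^{T-1}$. Beyond these, the argument is a routine invocation of the second-order Trotter error estimate and the unitary-to-channel conversion for the diamond norm.
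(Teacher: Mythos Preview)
Your proposal is correct and follows the same approach as the paper --- a second-order Trotter--Suzuki product formula applied to the expansion of $H_T$ into $\mathcal{O}(2^T)$ conjugated copies of $H$ and $H_0$ --- which the paper simply cites from~\cite[Lemma~3.3]{abbas2025nearly} without unpacking. Your explicit identification of the key cancellation (the $1/2^{T-1}$ weights killing the $2^{3T}$ combinatorial factor in the triple-commutator sum, so the Trotter constant is $T$-independent) and your careful separate accounting of query count versus oracle evolution time are exactly the content the paper leaves implicit.
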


\begin{proof}
This follows from the standard second-order Trotter--Suzuki product formula.  
The operator-norm error bound follows from the Baker--Campbell--Hausdorff expansion together with known analyses of higher-order Trotter formulas; a detailed argument is provided in~{\cite[Lemma~3.3]{abbas2025nearly}}. Since each Trotter step uses evolution time $t/\ell$, the total evolution time is $\ell \cdot (t/\ell) = t$ for each of $H$ and $H_0$, up to constant factors.
\end{proof}

We now analyze how Trotterization affects Bell sampling.  
Let $V$ denote the Trotterized approximation to $e^{-itH_T}$. Then
\begin{align}
    \bigl|\Pr[\,\text{Bell outcome } I \text{ from } e^{-itH_T}\,]
      - \Pr[\,\text{Bell outcome } I \text{ from } V\,]\bigr|
      &= \left|\frac{1}{2^n}\Tr(e^{-itH_T} - V)\right| \nonumber \\
    &= \left|\frac{1}{2^n}\sum_j \bra{j}(e^{-itH_T}-V)\ket{j}\right| \nonumber \\
    &\le \frac{1}{2^n}\sum_j \|e^{-itH_T} - V\|_{\diamond} \nonumber \\
    &\le \varepsilon_{\mathrm{Trott}}.
\end{align}
Thus, the impact of Trotterization on Bell-sampling statistics is bounded by $\varepsilon_{\mathrm{Trott}}$.

\medskip
Hence, the effective per-unit-time query rate is  $\ell/t=\mathcal{O}\left(2^T \sqrt{{t}/{\varepsilon_{\mathrm{Trott}}}}\right)$, and the total query complexity in our algorithm follows by multiplying this quantity by the total evolution time (specified in~\cref{subsec:put}).

\subsection{Putting everything together}\label{subsec:put}

We now describe the full certification procedure (\cref{alg:main-algorithm}) and state its performance guarantee.
Concrete values for the numerical constants appearing in the algorithm are provided in~\cref{lem:constants}.

\begin{algorithm}[h]
\DontPrintSemicolon
\caption{\textsc{IntolerantCert-LocalH}\,$(H_0, H, \varepsilon, \delta, k)$}
\label{alg:main-algorithm}
\KwIn{Description of $H_0$, oracle access to $e^{-itH}$, parameters $\varepsilon>0$, $\delta\in(0,1)$, locality parameter $k$ (with universal constants $C_1,C_2,C_3,C_4,c_0$).}
\KwOut{\textsf{ACCEPT} or \textsf{REJECT}.}

\smallskip
Set $R \gets C_1 \cdot 3^k \log(1/\delta)$\;
\For{$r = 1$ \KwTo $R$}{
    \tcp{1. Random diagonal basis selection (\cref{sec:rand-diag-sel})}
    Sample $Q^{(1)},\dots,Q^{(n)} \in \{X,Y,Z\}$ independently and uniformly\;
    Set $S \gets \bigotimes_{j=1}^n \{I,Q^{(j)}\}$\;

    \medskip
    \tcp{2. Random twirling to obtain $H_T$ (\cref{sec:ext-eff-diag})}
    Set $T \gets C_2 \cdot k$\;
    Conceptually set $H_1 \gets H - H_0$\;
    \For{$i = 2$ \KwTo $T$}{
        Sample $P_i \in S$ uniformly at random\;
        Update $H_i \gets \frac{1}{2}(H_{i-1} + P_i H_{i-1} P_i)$\;
    }

    \medskip
    \tcp{3. Implement $e^{-itH_T}$ by Trotterization (\cref{sec:trott})}
    Set $b \gets C_3 \cdot 3^{k/2}/\varepsilon$\;
    Sample $t$ uniformly from $[0,b]$\;
    Use~\cref{thm:trotter} to implement a unitary $V$ satisfying
    $\|V - e^{-itH_T}\|_{\diamond} \le \varepsilon_{\mathrm{Trott}}$\;

    \medskip
    \tcp{4. Bell sampling}
    Set $m \gets C_4\cdot 9^k$\;
    Let $\widehat I_r(t)$ denote the fraction of identity outcomes among $m$ Bell samples using $V$\;

    \medskip
    \If{$\widehat I_r(t) \le 1 - c_0/9^k$}{
    \smallskip
        \Return \textsf{REJECT}
    }
}

\medskip
\Return \textsf{ACCEPT}
\smallskip
\end{algorithm}

\begin{theorem}[(Main certification algorithm)]\label{thm:main-algorithm}
Let $H$ and $H_0$ be $n$-qubit, $k$-local, traceless Hamiltonians.
For any $\varepsilon>0$ and $\delta\in(0,1)$, \cref{alg:main-algorithm} solves \cref{problem:intolerant} using total evolution time
\begin{equation}
    \mathcal{O}\left(\frac{3^k\cdot 3^{k/2}\cdot 9^k \log(1/\delta)}{\varepsilon}\right)
    = \mathcal{O}\left(\frac{c^k \log(1/\delta)}{\varepsilon}\right)
\end{equation}
for some universal constant $c>1$. For constant locality $k=\mathcal{O}(1)$ and constant failure probability, the total evolution time is $\Theta(1/\varepsilon)$.
\end{theorem}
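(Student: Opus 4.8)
\emph{Proof plan.} The strategy is to fix the universal constants $C_1,\dots,C_4,c_0$ and the internal Trotter accuracy $\varepsilon_{\mathrm{Trott}}$ (explicit values are collected in \cref{lem:constants}), verify the \textsf{ACCEPT}/\textsf{REJECT} behaviour of \cref{alg:main-algorithm} in the two promise cases of \cref{problem:intolerant} separately, and only then read off the total evolution time and query count. I would take $C_2$ large enough that $2^{T}\ge 2^{10}\,3^{5k}$, $C_3\ge 4\sqrt2$ so that the sampling window $[0,b]$ satisfies $2/b\le\tfrac12\|H_{\mathrm{eff}}\|_F$ whenever the diagonal-basis step succeeds, $C_4$ a large absolute constant and $c_0$ a small one, and $\varepsilon_{\mathrm{Trott}}$ polynomially small (by \cref{thm:trotter} this last choice affects only the per-step query count, never the evolution time). \emph{Completeness} is then immediate: if $H=H_0$ then $H-H_0=0$, so $H_{\mathrm{eff}}=0$, every twirled Hamiltonian vanishes, and the Bell identity-probability of $e^{-itH_T}=\mathbb{1}$ is exactly $1$ for all $t$; since $\|V-\mathbb{1}\|_{\diamond}\le\varepsilon_{\mathrm{Trott}}$, each Bell sample is non-identity with probability at most $\varepsilon_{\mathrm{Trott}}$, so with $\varepsilon_{\mathrm{Trott}}\le\delta/(2Rm)$ a union bound over all $Rm$ samples shows that, with probability at least $1-\delta$, every sample is the identity outcome, hence $\widehat I_r(t)=1>1-c_0/9^k$ in all rounds and the algorithm outputs \textsf{ACCEPT}.

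For \emph{soundness}, assume $\|H-H_0\|_F\ge\varepsilon$ and call a round \emph{good} if three independent sources of randomness cooperate: (i) the random local basis gives $\|H_{\mathrm{eff}}\|_F\ge\|H-H_0\|_F/(\sqrt2\cdot 3^{k/2})$ (\cref{pro:random-basis}); (ii) the twirling sequence gives $\|H_T'\|_F\le 2^{1-T/2}\|H-H_0\|_F$ (\cref{pro:z-twirl}); and (iii) the sampled time satisfies $I_T(t)\le 1-\tfrac18 9^{-k}$. The plan is to chain these. Conditioned on (i) and (ii): $H_{\mathrm{eff}}$ is diagonal in the local Pauli basis $S$, hence unitarily conjugate to a $Z$-diagonal $k$-local traceless Hamiltonian, so \cref{pro:Z-basis} gives $\Lambda(H_{\mathrm{eff}},\|H_{\mathrm{eff}}\|_F)\ge\tfrac14 9^{-k}$; the choice of $T$ together with (i) makes $\|H_T'\|_F$ a small enough multiple of $3^{-k}\|H_{\mathrm{eff}}\|_F$ that \cref{pro:eig-stability}, with $A=H_{\mathrm{eff}}$, $B=H_T'$, threshold $\|H_{\mathrm{eff}}\|_F$ and $p=\tfrac14 9^{-k}$, yields $\Lambda(H_T,\tfrac12\|H_{\mathrm{eff}}\|_F)\ge\tfrac18 9^{-k}$. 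Since $\Lambda(\cdot,\cdot)$ is non-increasing in its second argument and, on (i), $2/b\le\tfrac12\|H_{\mathrm{eff}}\|_F$, we get $\Lambda(H_T,2/b)\ge\tfrac18 9^{-k}$, so the argument inside the proof of \cref{lem:spectral-condition} (with its threshold set to $2/b$, so that its window is exactly $[0,b]$, and $d=\tfrac18 9^{-k}$) shows that (iii) holds with probability at least $\tfrac13$ over the choice of $t$. Hence a round is good with probability at least $\tfrac{1}{4\cdot3^k}\cdot\tfrac34\cdot\tfrac13=\tfrac{1}{16\cdot3^k}$. On a good round $I_T(t)\le 1-\tfrac18 9^{-k}$; since $\|V-e^{-itH_T}\|_{\diamond}\le\varepsilon_{\mathrm{Trott}}\le\tfrac{1}{16}9^{-k}$ shifts the Bell identity-probability by at most $\varepsilon_{\mathrm{Trott}}$, one Bell sample from $V$ is non-identity with probability at least $\tfrac{1}{16}9^{-k}$, and a Chernoff bound over the $m=C_4 9^k$ samples (with $C_4$ large, $c_0$ small) makes $\widehat I_r(t)\le 1-c_0/9^k$ with probability at least $\tfrac34$, forcing a reject. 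Each of the $R=C_1 3^k\log(1/\delta)$ rounds therefore rejects independently with probability at least $\tfrac{3}{64\cdot3^k}$, so $\Pr[\text{no rejection}]\le(1-\tfrac{3}{64\cdot3^k})^{R}\le\delta$ for $C_1$ a sufficiently large constant, and the algorithm outputs \textsf{REJECT} with probability at least $1-\delta$.

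The step I expect to be the main obstacle is matching (iii) to \cref{lem:spectral-condition}: the algorithm must commit to the window $[0,b]$ before it knows $\|H_{\mathrm{eff}}\|_F$, yet $[0,b]$ must be long enough to witness the eigenvalue oscillations of $H_T$ and short enough that $b=\mathcal O(3^{k/2}/\varepsilon)$; the monotonicity of $\Lambda$ in its threshold, combined with the uniform bound $\|H_{\mathrm{eff}}\|_F\ge\varepsilon/(\sqrt2\cdot3^{k/2})$ from (i), is what lets one fixed $b$ work, and keeping $C_3$ consistent with the interlocking constants in (i)--(iii) and in the Chernoff step is the delicate bookkeeping. The resource count is then routine: by \cref{thm:trotter} each use of $V$ costs total evolution time $\mathcal O(t)=\mathcal O(b)=\mathcal O(3^{k/2}/\varepsilon)$ under $H$ and $H_0$, so with $m=\mathcal O(9^k)$ Bell samples per round over $R=\mathcal O(3^k\log(1/\delta))$ rounds the total is $\mathcal O(R\,m\,b)=\mathcal O\big(3^k\cdot 9^k\cdot 3^{k/2}\log(1/\delta)/\varepsilon\big)=\mathcal O(c^k\log(1/\delta)/\varepsilon)$ with $c=3^{7/2}>1$, while the number of oracle queries is $Rm$ times the Trotter cost $\ell=\mathcal O\big(2^T\sqrt{b^3/\varepsilon_{\mathrm{Trott}}}\big)$, which is polynomial in all parameters. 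Specializing to $k=\mathcal O(1)$ and constant $\delta$ collapses every exponential-in-$k$ factor to $\mathcal O(1)$, giving total evolution time $\mathcal O(1/\varepsilon)$; with the matching $\Omega(1/\varepsilon)$ lower bound of \cite{kallaugher2025hamiltonianlocalitytestingtrotterized} (e.g.\ distinguishing $H=\varepsilon X$ from $H_0=-\varepsilon X$) this is $\Theta(1/\varepsilon)$, and substituting the explicit constants of \cref{lem:constants} completes the proof.
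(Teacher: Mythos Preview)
Your proposal is correct and follows essentially the same approach as the paper's proof: the same chain \cref{pro:random-basis} $\to$ \cref{pro:Z-basis} $\to$ \cref{pro:z-twirl} $\to$ \cref{pro:eig-stability} $\to$ \cref{lem:spectral-condition}, the same per-round success probability of order $3^{-k}$, and the same $R\cdot m\cdot b$ resource accounting. If anything, you are more careful than the paper about the point you flag as the main obstacle: the paper simply asserts that ``since $2/\eta=\mathcal{O}(3^{k/2}/\varepsilon)$, setting $b=\Theta(3^{k/2}/\varepsilon)$ suffices'' without explaining why a \emph{fixed} window $[0,b]$ that may be strictly larger than $[0,2/\eta]$ still triggers \cref{lem:spectral-condition}, whereas you make this explicit via the monotonicity of $\Lambda$ in its second argument and re-invoke the lemma at the smaller threshold $2/b$.
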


\begin{proof}
If $H = H_0$, then $H_1 = 0$ and hence $H_{\mathrm{eff}} = 0$ and $H_T' = 0$ in every round, so $e^{-itH_T}=I$.
Choosing $\varepsilon_{\mathrm{Trott}}$ sufficiently small and taking a union bound shows that the algorithm outputs \textsf{ACCEPT} with probability at least $1-\delta$.

\medskip
Now suppose that $\|H - H_0\|_F \ge \varepsilon$.
By randomized diagonal basis selection (\cref{pro:random-basis}), with probability at least $1/(4\cdot 3^k)$ over the random basis,
\begin{align}
    \|H_{\mathrm{eff}}\|_F & \ge \frac{\|H-H_0\|_F}{\sqrt{2}\cdot 3^{k/2}} \nonumber \\
    & \ge \frac{\varepsilon}{\sqrt{2}\cdot 3^{k/2}}.
\end{align}

\medskip
Since $H_{\mathrm{eff}}$ is diagonal in that basis, eigenvalue-gap bound for $Z$-diagonal $k$-local Hamiltonians (\cref{pro:Z-basis}) yields
\begin{equation}
    \Lambda(H_{\mathrm{eff}}, \|H_{\mathrm{eff}}\|_F) \ge \frac{1}{4}\cdot 9^{-k}.    
\end{equation}

\medskip
Next, by Random twirling toward the diagonal subspace (\cref{pro:z-twirl}), choosing $T = \Theta(k)$ so that $2^T \ge 2^{11}\cdot 3^{3k}$ ensures, with probability at least $3/4$,
\begin{align}
    \|H_T'\|_F & \le \frac{2}{2^{T/2}}\|H-H_0\|_F \nonumber \\
    & \le \frac{1}{16}\cdot 3^{-k}\cdot \|H_{\mathrm{eff}}\|_F.
\end{align}

\medskip
Applying Stability of eigenvalue-gap proportion (\cref{pro:eig-stability}) to $H_T = H_{\mathrm{eff}} + H_T'$ with
\(p=9^{-k}/4\) and \(q = 3^{-k}/16\) gives
$p=9^{-k}/4$ and $q=3^{-k}/16$ gives
\begin{equation}
    \Lambda\left(H_T,\, \frac12\|H_{\mathrm{eff}}\|_F\right) \ge p-32q^2 = \frac{1}{8}\cdot 9^{-k}.    
\end{equation}

\medskip
Let $\eta=\|H_{\mathrm{eff}}\|_F / 2\ge {\varepsilon}/(2\sqrt{2}\cdot3^{k/2})$ and $d = 9^{-k}/8$.
By sufficient spectral condition (\cref{lem:spectral-condition}), for $t$ uniform in $[0,2/\eta]$ the identity outcome probability satisfies $I_T(t)\le 1 - \Omega(d)$ with constant probability.
Since $2/\eta = \mathcal{O}(3^{k/2}/\varepsilon)$, setting $b = \Theta(3^{k/2}/\varepsilon)$ suffices.

\medskip
Distinguishing $I_T(t)=1$ from $I_T(t)\le1-\Omega(d)$ requires 
$m=\Theta(1/d)=\Theta(9^k)$ Bell samples.
Thus each outer round rejects with probability $\Omega(3^{-k})$, and repeating
$R=\Theta(3^k \log(1/\delta))$ rounds yields overall success probability at least $1-\delta$.

\medskip
Each Bell sample uses a Trotterized simulation of $e^{-itH_T}$ with total evolution time $\mathcal{O}(t)$ and $t\le b=\Theta(3^{k/2}/\varepsilon)$. Hence the total evolution time is
\begin{align}
    \mathcal{O}\left(R \cdot m \cdot b\right) & = \mathcal{O}\left(3^k\log(1/\delta)\cdot 9^k \cdot \frac{3^{k/2}}{\varepsilon}\right) \nonumber \\
    &= \mathcal{O}\left(\frac{c^k\log(1/\delta)}{\varepsilon}\right).
\end{align}

\medskip
Therefore, for constant locality $k=\mathcal{O}(1)$ and constant failure probability~$\delta$, the total evolution time of our algorithm is $\mathcal{O}(1/\varepsilon)$. Moreover, by the matching lower bound of $\Omega(1/\varepsilon)$ established in~\cite{kallaugher2025hamiltonianlocalitytestingtrotterized}, there exist constant-local Hamiltonians such as the simple pair $H=\varepsilon X$ and $H_0=-\varepsilon X$ that require total evolution time at least $\Omega(1/\varepsilon)$ to distinguish in~\cref{problem:intolerant}. Consequently, our result achieves the tight $\Theta(1/\varepsilon)$ bound for intolerant certification in the constant-locality regime.
\end{proof}

\begin{lemma}[(Concrete constants for~\cref{alg:main-algorithm})]\label{lem:constants}
There exist universal numerical constants $C_1,C_2,C_3,C_4>0$ and $c_0\in(0,1)$ such that~\cref{alg:main-algorithm} achieves the guarantees stated in~\cref{thm:main-algorithm}. In particular, one may take
\begin{equation}
    C_1=\frac{16}{3},\qquad
    C_2=17,\qquad
    C_3=4\sqrt{2},\qquad
    C_4 =128,\qquad
    c_0=\frac{1}{64}.
\end{equation}
\end{lemma}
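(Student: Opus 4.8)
The plan is to treat this as a bookkeeping lemma: we retrace the proof of \cref{thm:main-algorithm}, but replace every hidden $\Theta(\cdot)$ and $\Omega(\cdot)$ by the explicit constants in the statement and verify that each inequality in the chain survives with the proposed values $C_1=16/3$, $C_2=17$, $C_3=4\sqrt2$, $C_4=128$, $c_0=1/64$. No new ideas are needed beyond those already in \cref{pro:random-basis,pro:Z-basis,pro:z-twirl,pro:eig-stability,lem:spectral-condition,thm:trotter}; the work is entirely in tracking the numerical margins.

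First I would fix the soundness chain, conditioning on $\|H-H_0\|_F\ge\varepsilon$. \cref{pro:random-basis} gives $\|H_{\mathrm{eff}}\|_F\ge\varepsilon/(\sqrt2\,3^{k/2})$ with probability $\ge 1/(4\cdot 3^k)$; \cref{pro:Z-basis} applied in the chosen diagonal basis gives $p:=\Lambda(H_{\mathrm{eff}},\|H_{\mathrm{eff}}\|_F)\ge 9^{-k}/4$; and \cref{pro:z-twirl} with $T=C_2 k=17k$ forces $\|H_T'\|_F\le \tfrac{2}{2^{T/2}}\|H-H_0\|_F\le q\,\|H_{\mathrm{eff}}\|_F$ with $q=3^{-k}/16$ and probability $\ge 3/4$, provided $2^{17k}\ge 2^{11}\cdot 3^{3k}$. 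I would check this last inequality holds for every integer $k\ge 1$, since it is equivalent to $(17-3\log_2 3)\,k\ge 11$ and already passes at $k=1$ with room to spare; this is the binding twirl-depth constraint. Feeding $p=9^{-k}/4$ and $q=3^{-k}/16$ into \cref{pro:eig-stability} yields $\Lambda\!\big(H_T,\tfrac12\|H_{\mathrm{eff}}\|_F\big)\ge p-32q^2=9^{-k}/8$, which I call $d$. Then \cref{lem:spectral-condition} with $\eta=\tfrac12\|H_{\mathrm{eff}}\|_F\ge\varepsilon/(2\sqrt2\,3^{k/2})$ produces, for $t$ uniform in $[0,2/\eta]$, the event $I_T(t)\le 1-d/4=1-9^{-k}/32$ with probability $\ge 1/3$; since $2/\eta\le 4\sqrt2\,3^{k/2}/\varepsilon$, the choice $b=C_3\,3^{k/2}/\varepsilon$ with $C_3=4\sqrt2$ is exactly what makes $[0,b]\supseteq[0,2/\eta]$.

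Next I would calibrate the Bell test of step 4 of \cref{alg:main-algorithm}. When $H=H_0$ we have $e^{-itH_T}=I$, so the ideal identity probability is $1$; in the good soundness event above it is at most $1-9^{-k}/32$. The threshold $\widehat I_r(t)\le 1-c_0/9^k$ with $c_0=1/64$ lies strictly inside the interval $(1-9^{-k}/32,\,1)$, and with $m=C_4\,9^k=128\cdot 9^k$ shots it is equivalent to observing at least $C_4 c_0=2$ non-identity outcomes — a clean integer threshold. For soundness I would note that the per-shot non-identity probability is $\ge 9^{-k}/32-\varepsilon_{\mathrm{Trott}}$, so once $\varepsilon_{\mathrm{Trott}}$ is a small enough constant times $9^{-k}$ the expected non-identity count is $\ge 4$, and a Poisson/Chernoff lower-tail estimate gives "count $\ge 2$" with probability at least some absolute constant $c_{\mathrm{Bell}}$. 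For completeness I would choose $\varepsilon_{\mathrm{Trott}}\le \delta/(Rm)$ so that the expected non-identity count in each round is $\le\delta/R$, whence Markov and a union bound over the $R$ rounds give overall \textsf{ACCEPT} with probability $\ge 1-\delta$; here \cref{thm:trotter} is crucial, because the total evolution time to realize $V$ is $\mathcal O(t)=\mathcal O(b)$ \emph{independently} of $\varepsilon_{\mathrm{Trott}}$, so this shrinking of $\varepsilon_{\mathrm{Trott}}$ costs only extra gate/query count, not evolution time.

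Finally, for soundness amplification I would multiply the per-round factors: a single round outputs \textsf{REJECT} with probability at least $\rho\ge \frac{1}{4\cdot 3^k}\cdot\frac34\cdot\frac13\cdot c_{\mathrm{Bell}}=\Omega(3^{-k})$, so over $R=C_1\,3^k\log(1/\delta)=\tfrac{16}{3}\,3^k\log(1/\delta)$ independent rounds the chance of never rejecting is at most $(1-\rho)^R\le e^{-\rho R}\le\delta$, once $C_1$ is taken large enough relative to $c_{\mathrm{Bell}}$; one then records the evolution-time total $R\cdot m\cdot b=\Theta\!\big(3^k\log(1/\delta)\cdot 9^k\cdot 3^{k/2}/\varepsilon\big)$, matching \cref{thm:main-algorithm}. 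The step I expect to be the main obstacle is the Bell-test calibration: one must verify that $m=\Theta(9^k)$ shots — not the $\Theta(81^k)$ a naive Hoeffding bound would demand — genuinely suffice to separate identity probability $1$ from $1-\Theta(9^{-k})$ with constant confidence, which forces the soundness argument to count rare non-identity events via a Poisson-tail estimate, and to place $c_0$ so that simultaneously $C_4 c_0$ is a convenient integer and the margin $9^{-k}/32-c_0\,9^{-k}$ stays positive; making all of $C_1,C_2,C_3,C_4,c_0$ and the $\varepsilon_{\mathrm{Trott}}$ budget consistent at once (with the tightest constraints occurring at small $k$) is where the bookkeeping is genuinely delicate rather than routine.
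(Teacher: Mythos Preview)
Your proposal is correct and follows the same bookkeeping route as the paper's own proof: trace the chain \cref{pro:random-basis}$\to$\cref{pro:Z-basis}$\to$\cref{pro:z-twirl}$\to$\cref{pro:eig-stability}$\to$\cref{lem:spectral-condition} and verify each numerical margin. You are in fact more careful than the paper in two places: (i) you explicitly carry the $1/3$ factor from \cref{lem:spectral-condition} and a Bell-test success constant $c_{\mathrm{Bell}}$ into the per-round rejection probability---the paper simply writes that probability as $(3/4)\cdot 1/(4\cdot 3^k)$, which is how it lands on $C_1=16/3$ without accounting for those factors, so your hedge ``once $C_1$ is taken large enough relative to $c_{\mathrm{Bell}}$'' is well placed; and (ii) you choose $\varepsilon_{\mathrm{Trott}}\le \delta/(Rm)$ with an explicit Markov-plus-union-bound completeness argument, whereas the paper fixes $\varepsilon_{\mathrm{Trott}}\le 1/(128\cdot 9^k)$ and leaves the completeness calculation implicit.
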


\begin{proof}
The choice $C_1=16/3$ ensures that repeating the outer loop
$R=C_1\cdot 3^{k}\log(1/\delta)$ times amplifies the per-round detection probability 
$(3/4)\cdot(1/(4\cdot 3^{k}))$ to at least $1-\delta$.

\medskip
For the twirling length, \cref{pro:z-twirl} yields
$\|H_T'\|_F \le 2\cdot 2^{-T/2}\|H-H_0\|_F$ with probability at least $3/4$.
To guarantee the perturbative regime required for eigenvalue-gap stability, it suffices to choose $T$ such that
\begin{equation}
    2\cdot 2^{-T/2} \le 
    \frac{1}{16}\cdot 3^{-k}\cdot \frac{1}{\sqrt{2}\,3^{k/2}}.
\end{equation}
This is implied by $2^{T}\ge 2^{11}\cdot 3^{3k}$.
Since $T\ge 11+(3\log 3)k$ and $3\log 3 < 17$, one may take $C_2=17$.

\medskip
For the time range, with $\eta=\tfrac12\|H_{\mathrm{eff}}\|_F\ge\varepsilon/(2\sqrt2\cdot 3^{k/2})$, the choice $b=2/\eta$ ensures $b\le 4\sqrt2\cdot 3^{k/2}/\varepsilon$, so $C_3=4\sqrt2$ suffices.

\medskip
Finally, in the \textsf{ACCEPT} case the ideal identity probability is $1$, whereas in the \textsf{REJECT} case,
\cref{lem:spectral-condition} yields
$I_T(t)\le 1-d/4$ with $d=9^{-k}/8$, i.e., a gap of $1/(32\cdot 9^{k})$.
Choosing $\varepsilon_{\mathrm{Trott}}\le 1/(128\cdot 9^{k})$ and threshold 
$1-c_0/9^{k}$ with $c_0=1/64$ preserves this gap under implementation error.
Thus the algorithm requires $128\cdot 9^{k}$ Bell samples per round, and we may set $C_4=128$.
\end{proof}

\paragraph{Query complexity.}
Each Trotterized implementation of $e^{-itH_T}$ uses  $\mathcal{O}\left(2^{T}\,\sqrt{{t}/{\varepsilon_{\mathrm{Trott}}}}\right)$ oracle queries, as guaranteed by~\cref{thm:trotter}.  
Since the algorithm samples $t$ uniformly from $[0,b]$ with $b=\Theta(3^{k/2}/\varepsilon)$ and performs
$R=\Theta(3^{k}\log(1/\delta))$ rounds, each of which uses $m=\Theta(9^{k})$ Bell samples, multiplying these contributions together yields an overall query count of
\begin{equation}
    \mathcal{O}\left(
        \frac{2^{T}\cdot9^{k}\cdot3^{k/2}\log(1/\delta)}{\varepsilon^{3/2}}
    \right)
    = \mathcal{O}\left(
        \frac{c^k_*\log(1/\delta)}{\varepsilon^{3/2}}
    \right)
\end{equation}
for some universal constant $c_*>1$, where the $k$-dependence from $T$ and $\varepsilon_{\mathrm{Trott}}$ has been absorbed into $c^k_*$.

\addcontentsline{toc}{section}{Acknowledgements}
\section*{Acknowledgments}
We thank Savar D. Sinha and Yu Tong for insightful discussions.

\addcontentsline{toc}{section}{References}
\bibliographystyle{alpha}
\bibliography{citation.bib}

\newpage
\appendix
\section{Deferred proofs}
\subsection{Proof of~\cref{pro:random-basis} (Randomized diagonal basis selection)}\label{app:proof-random-basis}

\begin{proof}
Write the normalized Frobenius norm in terms of the Pauli coefficients:
\begin{equation}
    \|H\|_F^2 \;=\; \sum_{P} \alpha_P^2 \;\ge\; \varepsilon^2.
\end{equation}
For each Pauli string $P$, denote its weight (number of non-identity tensor factors) by $|P|$.
(Assume $H$ is $k$-local, i.e., $\alpha_P=0$ whenever $|P|>k$.)

\medskip
Given random $Q_1,\dots,Q_n\in\{X,Y,Z\}$, the term $\alpha_P P$ appears in $H_Q$ if and only if, for every
qubit $i$ in the support of $P$, the non-identity Pauli of $P$ on that qubit equals $Q_i$.
Since each $Q_i$ is uniform on $\{X,Y,Z\}$, we have
\begin{equation}
    \Pr[\, P\ \text{survives in }H_Q \,] = 3^{-|P|}.
\end{equation}
Define indicator random variables $X_P\in\{0,1\}$ by
\[
    X_P =
    \begin{cases}
    1, & \text{if $P$ survives in $H_Q$},\\
    0, & \text{otherwise},
    \end{cases}
\]
and set $\beta_P \coloneqq \alpha_P^2$. Then
\begin{equation}
    Y \coloneqq \|H_Q\|_F^2 = \sum_{P} \beta_P X_P .
\end{equation}

\medskip
\noindent\textit{First moment.}
\begin{align}
    \mathds{E}[Y]
    &= \sum_{P}\beta_P\,\mathds{E}[X_P]
     = \sum_{P}\beta_P\,3^{-|P|} \nonumber \\
    & \ge 3^{-k}\sum_{P}\beta_P \nonumber \\
    & = 3^{-k}\|H\|_F^2.
\end{align}

\medskip
\noindent\textit{Second moment.}
\begin{align}
    \mathds{E}[Y^2]
    &= \mathds{E}\left[\left(\sum_{P}\beta_P X_P\right)^2\right] \notag\\
    &= \sum_{P}\beta_P^2 \, \mathds{E}[X_P]
       + 2\sum_{P<P'} \beta_P \beta_{P'} \, \mathds{E}[X_P X_{P'}].
\end{align}
For any $P,P'$, the event $\{X_P=X_{P'}=1\}$ can occur only if $P$ and $P'$ agree on every qubit in
$\mathrm{supp}(P)\cap\mathrm{supp}(P')$; otherwise $\mathds{E}[X_PX_{P'}]=0$.
In the consistent case it requires fixing $Q_i$ on all qubits in $\mathrm{supp}(P)\cup\mathrm{supp}(P')$, hence $\mathds{E}[X_PX_{P'}] \;=\; 3^{-|\mathrm{supp}(P)\cup\mathrm{supp}(P')|}$. Therefore,
\begin{align}
\mathds{E}[X_PX_{P'}] & \le 3^{-|\mathrm{supp}(P)\cup\mathrm{supp}(P')|} \nonumber \\
& \le 3^{-(|P|+|P'|)/2},
\end{align}
since $|\mathrm{supp}(P)\cup\mathrm{supp}(P')|\ge (|P|+|P'|)/2$.

\medskip
Therefore
\begin{align}
    \mathds{E}[Y^2]
    &\le \sum_{P}\beta_P^2 \, 3^{-|P|}
      + 2\sum_{P<P'} \beta_P\beta_{P'} \, 3^{-(|P|+|P'|)/2} \notag\\
    &= \left(\sum_{P}\beta_P \, 3^{-|P|/2}\right)^2.
\end{align}
Define $t_P := \beta_P 3^{-|P|/2}$. Then
\begin{equation}
    \mathds{E}[Y^2] \le \left(\sum_{P} t_P\right)^2.
\end{equation}
On the other hand,
\begin{align}
    \mathds{E}[Y]
    & = \sum_{P}\beta_P 3^{-|P|} \nonumber \\
    & = \sum_{P} t_P 3^{-|P|/2} \nonumber \\
    & \ge 3^{-k/2} \sum_{P} t_P,
\end{align}
since $|P|\le k$ implies $3^{-|P|/2}\ge 3^{-k/2}$. Hence
\begin{equation}
    \sum_{P} t_P \le 3^{k/2}\,\mathds{E}[Y]
    \quad\Rightarrow\quad
    \mathds{E}[Y^2] \le 3^k (\mathds{E}[Y])^2.
\end{equation}

\medskip
\noindent\textit{Applying Paley--Zygmund.}
Apply Paley--Zygmund inequality (\cref{lem:p-z}) to $Y$ with $\theta = 1/2$:
\begin{align}
    \Pr\left[ Y \ge \frac12 \, \mathds{E}[Y] \right]
    & \ge \left(1-\frac12\right)^2 \frac{\big(\mathds{E}[Y]\big)^2}{\mathds{E}[Y^2]} \nonumber \\
    & \ge \frac{1}{4}\cdot \frac{1}{3^k}.
\end{align}
On this event,
\begin{align}
    \|H_Q\|_F^2 & = Y \nonumber \\
    & \ge \frac12\,\mathds{E}[Y] \nonumber \\
    & \ge \frac12\cdot 3^{-k}\|H\|_F^2,
\end{align}
and therefore
\begin{align}
    \|H_Q\|_F & \ge \frac{\|H\|_F}{\sqrt{2}\cdot 3^{k/2}} \nonumber \\
    & \ge \frac{\varepsilon}{\sqrt{2}\cdot 3^{k/2}}.
\end{align}
Thus,
\begin{equation}
    \Pr\left[\|H_Q\|_F \ge \frac{\|H\|_F}{\sqrt{2}\cdot 3^{k/2}}\right]
    \ge \frac{1}{4}\cdot \frac{1}{3^k}.
\end{equation}
This completes the proof.
\end{proof}

\subsection{Proof of~\cref{pro:z-twirl} (Random twirling toward the diagonal subspace)}\label{app:proof-z-twirl}
\begin{proof}
We first show that $H_i = H_{\mathrm{eff}} + H_i'$ for all $i$.
Since $S=\bigotimes_{j=1}^n\{I,Q^{(j)}\}$ consists of Paulis that are either $I$ or the fixed single-qubit Pauli $Q^{(j)}$ on each qubit,
all elements of $S$ commute with each other. Hence each random $P_i\in S$ commutes with every $P\in S$, and therefore
\begin{equation}
    H_{\mathrm{eff}} = \frac12 (H_{\mathrm{eff}} + P_i H_{\mathrm{eff}} P_i).
\end{equation}

Assume $H_{i-1}=H_{\mathrm{eff}}+H_{i-1}'$. Then
\begin{align*}
H_i
&= \frac12 (H_{i-1}+P_i H_{i-1} P_i) \nonumber \\
&= \frac12 (H_{\mathrm{eff}}+P_i H_{\mathrm{eff}} P_i)
   + \frac12 (H_{i-1}'+P_i H_{i-1}' P_i) \nonumber \\
& = H_{\mathrm{eff}} + H_i'.
\end{align*}
By induction, $H_i = H_{\mathrm{eff}}+H_i'$ holds for all $i$.

\medskip
Now fix a Pauli string $P\notin S$. Let $B(P) := \{ j\in[n] : P_j\neq I \ \text{and}\ P_j\neq Q^{(j)} \}$. Then $B(P)\neq\emptyset$ (this is exactly $P\notin S$). For each $j\in B(P)$, the single-qubit Paulis $P_j$ and $Q^{(j)}$ anticommute.
A random $P_i\in S$ includes $Q^{(j)}$ on qubit $j$ independently with probability $1/2$, so the overall commutation sign between $P$ and $P_i$
is $(-1)^{\sum_{j\in B(P)} \mathbf{1}[\,P_i \text{ uses }Q^{(j)}\text{ on }j\,]}$, which is equally likely to be $+1$ or $-1$ since $B(P)\neq\emptyset$.
Therefore,
\begin{equation}
    \Pr[\, P \text{ commutes with } P_i \,] = \frac12.    
\end{equation}

\medskip
If $P$ anticommutes with $P_i$, then
\begin{equation}
    \frac12 (P + P_i P P_i) = \frac12(P - P) = 0,    
\end{equation}
so the $P$-coefficient is killed at that step; if $P$ commutes, it survives unchanged. Hence, over $T$ independent steps,
\begin{equation}
    \Pr[\, P \text{ survives in } H_T' \,] = \left(\frac12\right)^T = 2^{-T}.    
\end{equation}

Define indicators $X_P\in\{0,1\}$ for $P\notin S$ by $X_P=1$ iff $P$ survives in $H_T'$, and set $\beta_P:=\alpha_P^2$.
Then
\begin{equation}
    \|H_T'\|_F^2 \;=\; \sum_{P\notin S} \beta_P X_P.    
\end{equation}
Taking expectation gives
\begin{align}
    \mathds{E}[\|H_T'\|_F^2]
& = \sum_{P\notin S} \beta_P\,\mathds{E}[X_P] \nonumber \\
& = 2^{-T}\sum_{P\notin S}\alpha_P^2 \nonumber \\
& = 2^{-T}\,\|H_1'\|_F^2 \nonumber \\
& \le 2^{-T}\,\|H\|_F^2.    
\end{align}

Finally, apply Markov's inequality to the nonnegative random variable $Y:=\|H_T'\|_F^2$:
\begin{equation}
    \Pr\left[ Y \ge 4\,\mathds{E}[Y] \right] \le \frac14,    
\end{equation}
so with probability at least $3/4$,
\begin{align}
    \|H_T'\|_F^2 & \le 4\,\mathds{E}[Y] \nonumber \\
    & \le 4\cdot 2^{-T}\|H\|_F^2,    
\end{align}
i.e.,
\begin{equation}
    \|H_T'\|_F \le \frac{2}{2^{T/2}}\|H\|_F.    
\end{equation}
This proves the claim.
\end{proof}

\end{document}